\newcommand{\poly}{\mathrm{poly}}
\newcommand{\polylog}{\mathrm{polylog}}
\newcommand{\FRS}{\mathrm{FRS}}
\newcommand{\AFRS}{\epsilon\mathrm{\mbox{-}FRS}}
\theoremstyle{plain}
\begin{document}
\title{Feasible Sampling of Non-strict Turnstile Data Streams}

\author[1]{Neta Barkay}
\author[1]{Ely Porat}
\author[1]{Bar Shalem}
\affil[1]{Department of Computer Science, Bar-Ilan University, Israel\\
  \texttt{\{netabarkay,porately,barshalem\}@gmail.com}}

%
\Copyright[nc-nd]
          {Neta Barkay, Ely Porat and Bar Shalem}

\subjclass{G.3 Probability and Statistics, E.1 Data Structures }
\keywords{data streams, sampling, inverse distribution, Non-strict Turnstile}

\serieslogo{}
\volumeinfo
  {Billy Editor, Bill Editors}
  {2}
  {Conference title on which this volume is based on}
  {1}
  {1}
  {1}
\EventShortName{}
\DOI{10.4230/LIPIcs.xxx.yyy.p}

\maketitle
\begin{abstract}
We present the first feasible method for sampling a dynamic data stream with deletions,
where the sample consists of pairs $(k,C_k)$ of a value $k$ and its exact total count $C_k$.
Our algorithms are for both Strict Turnstile data streams and the most general Non-strict Turnstile data streams, where each element may have a negative total count.
Our method improves by an order of magnitude the known processing time of each element in the stream, 
which is extremely crucial for data stream applications.
For example, for a sample of size $O(\epsilon^{-2} \log{(1/\delta)})$ in Non-strict streams, our solution requires $O((\log\log(1/\epsilon))^2 + (\log\log(1/\delta)) ^ 2)$ operations per stream element,
whereas the best previous solution requires $O(\epsilon^{-2} \log^2(1/\delta))$ evaluations of a fully independent hash function per element.
Here $1-\delta$ is the success probability and $\epsilon$ is the additive approximation error.

We achieve this improvement by constructing a single data structure from which multiple elements can be extracted with very high success probability. The sample we generate is useful for calculating both forward and inverse distribution statistics,  
within an additive error, with provable guarantees on the success probability.
Furthermore, our algorithms can run on distributed systems and extract statistics on the union or difference between data streams. 
They can be used to calculate the Jaccard similarity coefficient as well.

\end{abstract}

\section{Introduction}

Sampling is a fundamental component in data stream algorithms \cite{muth05},
as a method to keep a synopsis of the data in memory sublinear to the size of the stream.
The sample can be used to calculate stream statistics of interest such as the frequent items in the stream 
(also called \emph{heavy hitters}) or the quantiles, when the stream itself cannot be fully kept in memory.

In the most general data stream model, the data is a series of elements $(x_i,c_i)$ where $x_i$ is an element's value and $c_i$ is a count. A certain value may appear in the stream multiple times with various counts.
Summing all counts of a value $k$ in the stream gives a pair $(k,C_k)$ where $C_k =\sum_{i \colon x_{i}=k}{c_{i}}$ is the total count of $k$.
A value $k$ with $C_k = 0$ is a deleted element, and its effect on stream statistics 
should be as if it had never appeared in the stream. 
Particularly, it must not appear in a sample of the stream obtained by any sampling algorithm.

We denote the function that maps the values to their frequencies by $f$, i.e.
$f(k)=C_k$. 
An \emph{exact sampling algorithm} outputs a sample of the pairs $(k,f(k))$ composed of values and their exact total count.
An \emph{$\epsilon$-approximate sampling algorithm} for $\epsilon \in (0,1)$ outputs a sample of the
pairs $(k,{f'(k)})$, where ${f'(k)} \in [(1-\epsilon)f(k), (1+\epsilon)f(k)]$. 

An $\epsilon$-approximate sample is sufficient for answering \emph{forward distribution} queries, which concern properties of $f$, such as what is $f(k)$.
However, it cannot be used for queries on the \emph{inverse distribution function} $f^{-1}$, defined as
$f^{-1}(C) = \frac{ \left| \{k \colon C_k = C \} \right|}{  \left| \{k \colon C_k \neq 0 \} \right|}$ for $C \neq 0$, i.e. the fraction of distinct values with a total count equal to $C$.
The reason is that an $\epsilon$-approximation of $f$ can result in a significant change to $f^{-1}$.
For example, if an $\alpha \in (0,1)$ fraction of the distinct values have a total count $C$, and in the $\epsilon$-approximate sample all of them have a total count $(1+\epsilon)C$, one might get $f^{-1}(C) = 0$ instead of $\alpha$.
Thus, an exact sample is required in order to approximate the inverse distribution.

The algorithms we present in this paper are exact sampling algorithms for the most general case of streams with deletions. 
Thus, they are useful for calculating both forward and inverse distribution statistics.
We describe applications which require exact sampling next.

Data stream algorithms are often used in network traffic analysis.
They are placed in DSMSs -  Data Stream Management Systems.
Applications of dynamic exact sampling in these environments include
detecting malicious IP traffic in the
network. Karamcheti et al.\ \cite{karamcheti05} showed
that inverse distribution statistics
can be used for earlier detection of content similarity, an
indicator of malicious traffic.
Inverse distribution queries can also be used for detecting denial-of-service (DoS) attacks, specifically SYN floods,
which are characterized as flows with a single packet (often called \emph{rare flows}).

Exact dynamic sampling is beneficial in geometric data streams as
well \cite{frahling05,frahling08}. In these streams, the items
represent points from a discrete $d$-dimensional geometric space ${\{1,\ldots,\Delta\}}^{d}$.
Our algorithms can also run on data streams of this type.

\paragraph*{Previous Work}

Most previous work that has been done on sampling of dynamic data streams that support
deletions was limited to approximating
the forward distribution \cite{cohen12, gemulla07, mone10}. 
Works on the inverse distribution include a restricted model with total counts
of 0/1 for each value \cite{gemulla06, tao07}, and
minwise-hashing, which samples uniformly the set of items but does
not support deletions \cite{datar03}.
The work \cite{gibbons01} supports only a few deletions.

Inverse distribution queries in streams with multiple deletions were supported in a work by Frahling et al.\ \cite{frahling05,frahling08},
who developed a solution for Strict Turnstile data streams and used it
in geometric applications. 
Cormode et al.\ \cite{cormode05} developed a solution for both the
Strict Turnstile and the Non-strict Turnstile streams. However, they
did not analyze the required randomness or the algorithm's error probability
in the Non-strict model.
Jowhari et al.\ \cite{jowhari11} studied $L_p$ samplers \cite{mone10} and built an $L_0$ sampler for Non-strict Turnstile streams. 

\paragraph*{Our Results}

Previous works \cite{cormode05,frahling05,frahling08,jowhari11} constructed data
structures for sampling only a single element. 
In order to use their structures for applications that require a sample of size $K$,
one has to use $O(K)$ independent instances of their structure.
The obtained sample holds elements chosen independently, and it might contain
duplicates. 

Running the sampling procedure $O(K)$ times in parallel and inserting each element in the stream as an input to $O(K)$ instances of the data structure, results in an enormous process time.
Typical stream queries require a sample of size $K = \Omega(\frac{1}{\epsilon^2}\log{\frac{1}{\delta}})$ where the results are $\epsilon$ approximated, and $1-\delta$ is the success probability of the process.
Thus, the number of operations required to process each element in the stream is multiplied by many orders of magnitude. For typical values such as $\epsilon = 10^{-2}$, $\delta = 10^{-6}$ the number of operations for each element in the stream is multiplied by about $200{,}000$.
The structures of \cite{cormode05,frahling05,frahling08,jowhari11} cannot be used for obtaining a $K$ size sample due to this unfeasible process load.
We present algorithms that can accomplish the task.

Our contributions are as follows. 
\begin{itemize}

	\item 
	We construct a data structure that can extract a sample of size $K$, whereas previous works returned only a single sampled element.
Using a single data structure reduces significantly the process time and the required randomness.
Thus, our algorithms are feasible for data stream applications, in which fast processing is crucial. 
This optimization enables applications that were previously limited to gross approximations of order $\epsilon = 10^{-2}$ to obtain a much more accurate approximation of order $\epsilon = 10^{-6}$ in feasible time. 

\item We present solutions for both Strict Turnstile data streams and the most general Non-strict Turnstile
data streams. We are the first to provide algorithms
with proved success probability in the Non-Strict Turnstile
model.
For this model we develop a structure called the Non-strict Bin Sketch.

\item We provide more efficient algorithms in terms of the randomness required. Our algorithms do not require fully independent or min-wise independent hash functions or PRGs.

\item 
We introduce the use of $\Theta(\log{\frac{1}{\delta}})$-wise independent hash functions to generate a sample with $1-\delta$ success probability for any $\delta >0$.
Our method outperforms the traditional approach of increasing the success probability from a constant to $1-\delta$ by $\Theta(\log\frac{1}{\delta})$ repetitions. 
We utilize a method of fast evaluation of hash functions which reduces our processing time to $O((\log\log\frac{1}{\delta})^2)$, while the traditional approach requires $O(\log\frac{1}{\delta})$ time.

\end{itemize}

A comparison of our algorithms to previous work is presented in Table~\ref{table:previouswork}. 
We introduce two algorithms, denoted $\FRS$ 
(Full Recovery Structure)
 and $\AFRS$ according to the recovery structure used.
The performance of our algorithms is summarized in the table as well as in Theorems~\ref{theorem1} and~\ref{theorem2}. Our algorithms improve the update and the sample extraction times.

\begin{table}[h] 

\caption{Comparison of our sampling algorithms for streams with deletions to previous work.}
{
\renewcommand{\tabcolsep}{3pt}
\begin{tabular}{|p{0.9cm}|p{3.7cm}|p{2.5cm}|p{2.5cm}|p{2.3cm}|p{0.6cm}|}
\hline
\centering{
Alg.} & \centering{Memory size} & \centering{Update time given hash oracle} & \centering{Overall update time per element}& \centering{Sample extraction time}& Model \\

\hline

\cite{frahling05, frahling08} & $O(K \log^2(\frac{mr}{\delta}))$ & $O(K \log{m})$ & $O(K \cdot \polylog{m})$ & $O(K)$ & S\\

\cite{frahling08} & $O(K \log{m} \frac{1}{\epsilon} \log\frac{1}{\delta} \log(\frac{mr}{\epsilon}))  $ & 
$O(K\frac{1}{\epsilon} \log\frac{1}{\delta} \log{m}) $ & $O(K \frac{1}{\epsilon} \log\frac{1}{\delta} \log{m})$ &
$O(K \frac{1}{\epsilon} \log{\frac{1}{\delta}})$& S \\

\cite{cormode05} & $O(K\log^2{m} \log{(mr)}) $ & $O(K\log{m})$ & $O(K\log{m})$ & $O(K\log^2{m})$ &S\\

\cite{cormode05} & $O(K\log{m} \log{\frac{1}{\delta}} \log{(mr)}) $ & $O(K \log{\frac{1}{\delta}})$ & $-$ & $O(K \log{m} \log{\frac{1}{\delta}})$ & NS \\ 

\cite{jowhari11} & $O(K \log^2{m} \log\frac{1}{\delta})$ & $O(K)$ & $O(K\log{m})$ & $O(K\log{m}\log{\frac{1}{\delta}})$ & NS \\

\hline

$\FRS$ & $O(K \log{m} \log\frac{K}{\delta} \log{(mr)})$ & $O(\log\frac{K}{\delta})$ & $O(\log\frac{K}{\delta})$ & $O(K \log\frac{K}{\delta})$ & S/NS \\

$\AFRS$ & $O(K \log{m}  \log{(mr)})$ & $O(1)$ & $O({(\log\log\frac{K}{\delta})}^2)$ & $O(K)$  & S\\

$\AFRS$ & $O(K \log{m}  \log{(\frac{mr}{\delta})})$ & $O(1)$ & $O({(\log\log\frac{K}{\delta})}^2)$ & $O(K)$ & NS\\

\hline

\multicolumn{6}{l}{\begin{minipage}{1\textwidth}~\\
Notes:
$\FRS$ and $\AFRS$ support sample sizes $K = \Omega(\log\frac{1}{\delta})$ and $K = \Omega(\frac{1}{\epsilon}\log\frac{1}{\delta})$ respectively.
Memory size is given in bits.
Update time is the number of operations per element.
Sample extraction time is for a $K$ size sample.
S and NS denote the Strict and Non-strict models respectively.
In \cite{cormode05} S is the deterministic approach, NS is the probabilistic approach, and the hash function is fully independent.
In \cite{jowhari11} the extraction time is under the assumption of sparse recovery in linear time.

\end{minipage}
}
\end{tabular}
}
\label{table:previouswork}
\end{table}

\section{Preliminaries}

\subsection{Data Stream Model}

Our input is a stream in the \emph{Turnstile} data stream model \cite{muth05}.
The Turnstile data stream consists of $N$ pairs $(x_{i},c_{i})$
where $x_{i}$ is the element's value, and $c_{i}$ is its count.
The elements $x_i$ are taken from a fixed universe $U=[m]$ (where
$[m]=\{0,\ldots,m-1\}$). The counts $c_i$ are taken from the range $[-r,r]$.
Let $t$ be the time we process the $t$'th pair in the data stream.
We define the total count of value $k$ at time $t$ to be $C_{k}(t)=\sum_{i \le t \colon x_{i}=k}{c_{i}}$.
We assume that $\forall t,k, C_{k}(t)\in[-r,r]$. In the \emph{Strict
Turnstile} model, $C_{k}(t)\geq0$ at all times $t$. In the \emph{Non-strict
Turnstile} model, $C_{k}(t)$ may obtain negative values.

A sample $S$
drawn at time $t$ is a subset of ${\{(k,C_{k}(t)) \colon C_k(t) \neq 0\}}$.
Note that $C_{k}(t)$ is the exact total count at time $t$. To simplify the notation we consider sampling only at the end of the process, and denote $C_{k}=C_{k}(N)$.

\subsection{Problem Definition}

Given a data stream of $N$ pairs $(x_i, c_i)$,
which is either a Strict or Non-strict Turnstile data stream, assume 
there is an application that needs to perform some queries on the stream such as calculate the inverse distribution.
This application allows an $\epsilon \in (0,1)$ additive approximation to the answers, and a $1-\delta$ for $\delta \in (0,1)$ success probability of the process.
The application predefines the size of the sample it requires to be $K$, where
$K$ might depend on $\epsilon$ and $\delta$.
The input to our sampling algorithm is the data stream, $K$ and $\delta$.

Let $D = {\{ (k, C_{k}) \colon C_k \neq 0\} }$ be the set of all pairs of values and their total counts in the stream at the time we sample.
The output of our sampling algorithm is a sample $S \subseteq D$ of size $\left|S\right| = \Theta(K)$,
generated with probability $1-\delta$.
Note that the size of the sample returned is of order $K$ and not precisely $K$.

Applications typically require a sample of size $K=\Omega(\frac{1}{{\epsilon}^{2}}\log{\frac{1}{\delta}})$ for an $\epsilon$ approximation with $1-\delta$ success probability. However, our algorithms $\FRS$ and $\AFRS$, support even smaller sample sizes, $\Omega(\log{\frac{1}{\delta}})$ and $\Omega(\frac{1}{\epsilon}\log{\frac{1}{\delta}})$ respectively. 

We define the following two ``flavors'' of samples.

\begin{definition}
A \emph{$t$-wise independent sample} is a random set $S \subseteq D$ in which each subset of $t$ distinct elements in $D$ has equal probability to be in $S$.
\end{definition}

\begin{definition}
Let $X \subseteq D$ be a sample obtained by a $t$-wise independent sampling algorithm, and let $\epsilon \in (0,1)$. 
A subset $S \subseteq X$ of size $(1-\epsilon)\left| X \right| \leq \left| S \right| \leq \left| X \right|$ is a \emph{$(t,\epsilon)$-partial sample}.
\end{definition}

Our $\FRS$ algorithm returns a $t$-wise independent sample, where $t=\Omega(\log\frac{1}{\delta})$. 
Our  $\AFRS$ algorithm returns a $(t,\epsilon)$-partial sample.
This means there is a fractional bias of at most $\epsilon$ in the sample returned by $\AFRS$.

The key insight is that a $t$-wise independent sample for $t = \Omega{(\log\frac{1}{\delta})}$ guarantees the same approximation as an independent sample.
For example, a sample of size
$K=\Omega(\frac{1}{{\epsilon}^{2}}\log{\frac{1}{\delta}})$ enables the approximation of the inverse distribution
queries and the Jaccard similarity coefficient up to an additive error of $\epsilon$ with $1-\delta$
success probability.
The $(t,\epsilon)$-partial sample can be used for the same stream statistics because it only adds an error of at most $\epsilon$ to the approximation.
This is demonstrated in Sect.~\ref{sec:apps}.

\subsection{Hash Function Techniques}

Throughout the paper we make extensive use of $t$-wise independent hash functions for $t= \Theta(\log\frac{1}{\delta})$ and $t= \Theta(\log\frac{K}{\delta})$.
We use the following techniques in our analysis.
\label{sec:highmoments}
For bounding the error probability we use the Moment Inequality for high moments and the estimation of \cite{indyk01}
(see Appendix \ref{appendix:chebyshev}).
For hash evaluations we use the multipoint evaluation algorithm of a polynomial of degree less than $t$ on $t$ points in $O(t\log^{2}{t})$ operations \cite{gathen99}.
Thus, evaluation of a $t$-wise independent hash function takes $O(\log^{2}{t})$ amortized time
per element by batching $t$ elements together. 
This is the time we use in our analysis whenever we evaluate these hash functions.

\section{Algorithm Overview}

In this section we provide an overall view of our sampling algorithm (see Fig.~\ref{fig:pseudocode}).

\begin{figure}[h]
\begin{centering}
\framebox[1\width]{
\begin{tabular}{p{175pt} p{10pt} p{175pt}}
{ Data structure update: For each pair $\rho=(x_{i},c_{i})$ in the
stream: 
\begin{enumerate}[1.]
\item Map $\rho$ to its corresponding level $l$. 
\item Update the recovery structure, $\FRS$ ($\AFRS$), in $l$:
	\begin{enumerate}[2.1.]
	\item For each array in $\FRS$ ($\AFRS$):
	\item Insert $\rho$ to $BS$ in the corresponding cell.
	\end{enumerate}
\item Update the $L_0$ estimation structure.

\end{enumerate}
}  &  & {Extracting a $K$ size sample:
\begin{enumerate}[1.]
\item Get $\tilde{L}_{0}$ from the estimation structure. 
\item Determine the level $l^{*}$ to be used for recovery. 
\item Recover the elements from $\FRS$ ($\AFRS$) in level $l^{*}$. 
\item Return the $\Theta(K)$ size sample. 
\end{enumerate}
}

\end{tabular}}
\end{centering} 
\caption{Pseudo codes of data structure update and extracting a $K$ size sample.}
\label{fig:pseudocode}
\end{figure}

In the first phase of our sampling algorithms the elements in the data stream are mapped to levels,
in a similar way to the method in \cite{cormode05}. 
Each element is mapped to a single level. The number of elements mapped to each level decreases exponentially.
Thus, we can draw a sample of the required size regardless of the number of distinct elements in the stream.

In each level we store a recovery structure for $K$ elements, which is the core of our algorithm. We present two alternative
structures, $\FRS$ and $\AFRS$, with trade-offs between space, update time and sample extraction time. 
Each structure consists of several arrays with a Bin Sketch ($BS$) in each cell of the arrays.
Assume a structure ($\FRS$ or $\AFRS$) contains the set of elements $X$, where \mbox{$K \leq \left| X \right| \leq \tilde{K}$} and $\tilde{K} = \Theta(K)$.
$\FRS$ enables to extract all the elements it contains, returning a $t$-wise independent sample, where $t = \Theta(\log\frac{1}{\delta})$.
$\AFRS$ enables to extract at least $(1-\epsilon)\left|X\right|$ of the $\left|X\right|$ elements it contains, returning a $(t,\epsilon)$-partial sample.
The problem of recovering the $\left|X\right|$ elements is similar to the sparse recovery problem \cite{gilbert10,porat07}, however in our case there is no tail noise, and we limit the amount of randomness.

The sample $S$ is the set of elements extracted from $\FRS$ (or $\AFRS$) in a single level.
In order to select the level we use a separate data structure that returns $\tilde{L}_0$, an estimation of the number of distinct elements in the stream.
This data structure is updated as each stream element arrives, in parallel to the process described above.

Extracting a sample is performed as follows. First we query the data structure of $\tilde{L}_0$. Then we select a level $l^{*}$ that should have $\Theta(K)$ elements with probability $1-\delta$. We recover
the $X$ elements from that level, or at least $(1-\epsilon)\left|X\right|$ of them, depending on the recovery structure, with probability
$1-\delta$. The elements recovered are the final output sample $S$.

\section{Sampling Algorithms} \label{sec:work}

\subsection{Bin Sketch - Data Structure Building Block}

\label{3 counters}

In this section we present the \emph{Bin Sketch ($BS$)}, a tool used as a building block in our data structure.
Given a data stream of elements ${\{(x_i,c_i)\}}_{i \in [N]}$, the input to $BS$ is a substream ${\{(x_i,c_i)\}}_{i \in I}$ where $I \subseteq [N]$.
$BS$ maintains a sketch of the substream. 
Its role is to identify if the substream contains a single value, and if so, to retrieve the value and its total count. 
Note that a single value can be obtained from a long stream of multiple elements if all values but one have zero total count at the end.

\paragraph*{Strict Bin Sketch ($BS_{s}$)}

We describe the Bin Sketch for the Strict Turnstile
model, previously used in \cite{frahling08,ganuly07}.
Given a stream of pairs $\{(x_{i},c_{i})\}_{i \in I}$, the \emph{Strict
Bin Sketch ($BS_{s}$)} consists of three counters $X$, $Y$
and $Z$:
$
X=\sum_{i \in I}{c_{i}},\ Y=\sum_{i \in I}{c_{i}x_{i}},\ Z=\sum_{i \in I}{c_{i}x_{i}^{2}}$.

$BS_s$ properties are summarized as follows:
The space usage of $BS_{s}$ is $O(\log(mr))$ bits. 
If $BS_{s}$ holds a single element, then we can detect and recover it.
$BS_{s}$ holds a single element $\Leftrightarrow$ $X\neq0$, $Y\neq0$, $Z\neq0$ and $XZ=Y^{2}$. 
The recovered element is $(k,C_{k})$, where $k=Y/X$ and
$C_{k}=X$. $BS_{s}$ is empty $\Leftrightarrow$ $X=0$.
For proof see \cite{ganuly07}.

\paragraph*{Non-strict Bin Sketch ($BS_{ns}$)}

We provide a generalization to Bin Sketch and adjust it to the Non-strict
Turnstile model.
If $BS_s$ contains two elements, then $XZ\neq Y^{2}$.
Thus, there is a distinction between $BS_s$ with two elements and $BS_s$ with a single element.
However, we cannot always distinguish between three or more elements and a single element.

A previous attempt to solve the problem used a deterministic collision
detection structure \cite{cormode05}. This structure
held the binary representation of the elements. However, this representation
falsely identifies multiple elements as a single element on some inputs\footnote{
For example, for every set of 4 pairs $\{(2k,1),(2k+1,-1),(2k+2,-1),(2k+3,1)\}$,
the whole structure is zeroed and any additional pair $(k^{'},C_{k}^{'})$
will be identified as a single element.}.

In order to solve the problem we use a new counter defined as follows.

\begin{lemma} \label{lemma:countersq} 
Let $\mathcal{H}=\{h \colon \left[m\right]\rightarrow\left[q\right]\}$
be a $t$-wise independent family of hash functions, and let $h$
be a hash function drawn randomly from $\mathcal{H}$.
Let $T=\sum_{k}{C_{k}h(k)}$ be a counter.
Assume $T$ is a sum of at most $t-1$ elements.
Then for every element $(k',C_{k}')$, where $k'\in\left[m\right]$, if $T$ is not the sum of the single element $(k',C_{k}')$, 
then $\Pr_{h \in \mathcal{H}}[T = C_{k}'h(k')] \leq 1/q$.
\end{lemma}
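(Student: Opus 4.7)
The plan is to show that the event $T = C_{k'}' h(k')$ corresponds to a nontrivial linear equation in a small set of hash values, and then bound its probability via $t$-wise independence. Let $S = \{k : C_k \neq 0\}$, so $|S| \leq t-1$ by hypothesis, and adopt the convention $C_{k'} = 0$ if $k' \notin S$. Subtracting $C_{k'}' h(k')$ from $T$, the target event becomes
\[
(C_{k'} - C_{k'}')\, h(k') \;+\; \sum_{k \in S \setminus \{k'\}} C_k\, h(k) \;=\; 0.
\]
The index set $S \cup \{k'\}$ has cardinality at most $t$, so by $t$-wise independence the hash values $\{h(k) : k \in S \cup \{k'\}\}$ are jointly uniform and independent on $[q]$.

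The next step is to check that the coefficient vector above is not identically zero. If it were, then $C_{k'} = C_{k'}'$ and $C_k = 0$ for all $k \neq k'$, which is exactly the statement that $T$ is the sum of the single element $(k', C_{k'}')$ --- the excluded case. Hence there exists some index $k_0 \in S \cup \{k'\}$ whose coefficient $\alpha_0$ is a nonzero integer. I would then condition on the hash values at all indices in $(S \cup \{k'\}) \setminus \{k_0\}$. Under this conditioning, the target event reduces to a single linear equation $\alpha_0\, h(k_0) = \beta$ for some fixed integer $\beta$. Because $\alpha_0 \neq 0$, this equation has at most one integer solution, and hence at most one valid value of $h(k_0)$ in $[q]$. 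By $t$-wise independence the conditional distribution of $h(k_0)$ is still uniform on $[q]$, so the conditional probability is at most $1/q$; averaging over the conditioning yields the claimed bound.

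The one mildly delicate point is the bookkeeping around the "overlap" case $k' \in S$: the contributions of $h(k')$ coming from $T$ and from $C_{k'}' h(k')$ must be combined into the single coefficient $C_{k'} - C_{k'}'$, so that the hypothesis "$T$ is not the sum of the single element $(k', C_{k'}')$" translates cleanly into a nonzero coefficient vector. Once that reduction is in place, the argument is a direct application of $t$-wise independence, and no algebraic structure on $[q]$ beyond integer arithmetic is required.
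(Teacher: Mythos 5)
Your proof is correct and follows essentially the same route as the paper's: subtract $C_{k'}'h(k')$ from $T$, observe that at most $t$ hash values are involved, and invoke $t$-wise independence to bound the probability of the resulting linear combination vanishing by $1/q$. You spell out the conditioning argument and the $k' \in S$ overlap bookkeeping that the paper leaves implicit, which is a welcome increase in rigor but not a different method.
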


\begin{proof} We subtract $C_{k}'h(k')$ from $T$ and obtain $T'=T-C_{k}'h(k')$.
If $T$ is not the sum of the single element $(k',C_{k}')$, there
are between $1$ and $t$ elements in $T'$. The hashes of those elements
are independent because $h$ is $t$-wise independent.
We therefore have a linear combination of independent uniformly distributed
elements in the range $[q]$, thus 
$\Pr_{h \in \mathcal{H}}[T'=0] \leq 1/q$.
\end{proof}

The \emph{Non-strict Bin Sketch ($BS_{ns}$)} consists of four
counters: $X$,$Y$,$Z$ and an additional counter $T$, as defined
in Lemma \ref{lemma:countersq}. 
The space of $BS_{ns}$ is $O(\log{(mrq)})$ bits.
The time to insert an element is $O(\log^2{t})$ since we evaluate a $t$-wise independent hash function.
There are other ways to maintain a sketch such as keeping a fingerprint. However, fingerprint update takes $O(\log{m})$ time, which depends on the size of the universe, while the update time of $BS_{ns}$ depends on $t$ which is later set to $t = \Theta(\log\frac{K}{\delta})$.

\begin{corollary} Three or more elements in $BS_{ns}$ may be falsely identified 
as a single element. The error probability is bounded
by $1/q$, when there are at most $t-1$ elements in the bin. \end{corollary}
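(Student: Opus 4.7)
The plan is to read the corollary as a direct consequence of Lemma~\ref{lemma:countersq} applied to the candidate element that the Strict Bin Sketch test would produce from $X,Y,Z$. First I would spell out the recovery test used on $BS_{ns}$: check the inherited $BS_s$ conditions ($X,Y,Z \neq 0$ and $XZ=Y^2$), form the candidate pair $(k',C_{k'})=(Y/X,X)$, and additionally verify $T = C_{k'} h(k')$. Only when all four checks pass do we declare the bin to hold a single element.

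Next I would enumerate the failure modes for misidentification. A genuine single element passes all four tests deterministically, so the only concern is that a multi-element bin is falsely reported as singleton. If the bin holds exactly two elements, the $BS_s$ relation $XZ=Y^2$ already rules them out (as recalled in the paragraph introducing $BS_{ns}$). Hence a false positive can only arise when the bin contains $j \geq 3$ elements whose $X,Y,Z$ happen to satisfy the $BS_s$ test and, in addition, the counter $T$ coincides with $C_{k'} h(k')$ for the candidate $(k',C_{k'})$ computed from the sketch.

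Finally I would invoke Lemma~\ref{lemma:countersq}. Under the corollary's hypothesis $j \leq t-1$, the counter $T$ is a sum of at most $t-1$ elements, so the lemma applies. Because $X,Y,Z$ are linear counters that do not depend on $h$, the candidate $(k',C_{k'})$ is a function of the input stream alone and is independent of $h$; the lemma may therefore be invoked with this fixed $(k',C_{k'})$. Since the true content of the bin is not the single element $(k',C_{k'})$, the lemma bounds $\Pr_h[T = C_{k'} h(k')] \leq 1/q$, which is exactly the claimed error bound. There is essentially no obstacle here beyond noticing that the candidate extracted from $X,Y,Z$ is independent of the hash $h$ used inside $T$, so that Lemma~\ref{lemma:countersq} is legitimately applicable to it.
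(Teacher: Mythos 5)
Your proof is correct and follows the route the paper intends; the paper itself gives no explicit argument, treating this as an immediate consequence of Lemma~\ref{lemma:countersq}. The one substantive detail you supply that the paper leaves implicit — and that is genuinely needed for the lemma to apply as stated ``for every element $(k',C_{k}')$'' — is that the candidate $(k',C_{k'})=(Y/X,X)$ is computed from the $h$-independent counters $X,Y,Z$, so it is a fixed element for the purposes of the probability over $h$; without that observation one would have to take a union bound over all possible candidates, which is not what the corollary claims. Filling this in, together with noting that the two-element case is already excluded by $XZ\ne Y^2$, is exactly the right reading.
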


$BS$ is placed in each cell, which we refer to as \emph{bin}, in each of the arrays of our data structure. Its role is
to identify a \emph{collision}, which occurs when more than one element
is in the bin. If no collision occurred, $BS$ returns the
single element in the bin.
$BS$ supports additions and deletions and is oblivious of the order of their occurrences in the stream, which makes it a \emph{strongly history independent} data structure \cite{micciancio97,naor01}. Its use in our sampling data structure makes the whole sampling structure strongly history independent.

\subsection{$L_{0}$ Estimation}

The number of distinct values in a stream with deletions is known as the \emph{Hamming norm} $L_{0}=\left|\{k\in[m] \colon C_{k}\neq0\}\right|$
\cite{cormode03}.
$L_{0}$ Estimation is used by our algorithm for choosing the level
$l^{*}$ from which we extract the elements that are our sample.
We use the structure of Kane et al.\ \cite{kane10}, which
provides an $\epsilon$-approximation $\tilde{L}_{0}$
with $2/3$ success probability for both Strict and Non-strict
streams. 
It uses space of $O(\frac{1}{\epsilon^{2}}\log{m}(\log\frac{1}{\epsilon}+\log\log(r)))$
bits and $O(1)$ update and report times.

Our algorithm requires only a constant approximation to $L_0$. Hence the space is $O(\log{m} \cdot \log\log(r)))$ bits.
However, we require $1-\delta$ success probability for any given $\delta > 0$ and not only $2/3$.
For estimation of $L_0$, where $L_0 = \Omega({\frac{1}{\epsilon^2}\log\frac{1}{\delta})}$ we use methods similar to those in the rest of the paper. We keep $\tau = \Theta(\log\frac{1}{\delta})$ instances of Kane et al. structure, 
and use one $\Theta(\log\frac{1}{\delta})$ independent hash function to map each stream element to 
one of the instances.
With constant high probability all instances have approximately the same number of elements.
$\tilde{L}_0$ is the median of the estimations obtained from all instances multiplied by $\tau$, and is a constant approximation to $L_0$ with probability $1-\delta$.

Thus we obtain an $L_0$ estimation algorithm with $O(\log{m} \log\log(r) \log\frac{1}{\delta})$ bits of space, $O((\log\log\frac{1}{\delta})^2)$ update time, since one $O(\log\frac{1}{\delta})$ independent hash function is evaluated, $O(\log\frac{1}{\delta})$ reporting time, and $1-\delta$ success probability.
These requirements are lower than their corresponding requirements in the other phases of our algorithm.

\subsection{Mapping to Levels}

The first phase of our algorithm is mapping the elements in the data stream to levels.
We use a family of $t$-wise 
independent hash functions $\mathcal{H}=\{h \colon [m]\rightarrow[M]\}$ for $t=\Theta(\log{\frac{1}{\delta}})$.
We select $h\in\mathcal{H}$ randomly and use it to map the elements
to $L=\log_{1/\lambda}{M}$ levels, for $\lambda \in (0,1)$. Typical values are $\lambda=0.5$ and
$M=2m$. The
mapping is performed using a set of hash functions $h_{l}(x)=\left\lfloor {\frac{h(x)}{{\lambda}^{l}M}}\right\rfloor $,
for $l\in[L]$. An element $x$ is mapped to level $l$ $\Leftrightarrow$
$(h_{l}(x)=0\wedge h_{l+1}(x)\neq0)$. Note that each element is mapped
to a single level.

Using this mapping, the set of elements mapped to each level is $t$-wise independent. It follows that in order to obtain a $t$-wise independent sample, we can extract all elements from any level we choose. However, we must select
the level independently of the elements that were mapped to it. If
any event that depends on the specific mapping influences the level selection,
the sample becomes biased. Biased samples appeared in some previous works.

In order to choose the level regardless of the mapping, we use the number of distinct elements
$L_{0}$. 
We obtain an estimation $\tilde{L}_{0}$ from the $L_{0}$ estimation structure, where $L_{0}\leq\tilde{L}_{0}\leq\alpha L_{0}$ for $\alpha>1$ with $1-\delta$ probability, and choose the level where $K$ elements are expected.

\begin{lemma} \label{lemma:mapping} 
Let the elements be mapped to levels
using a hash function $h$ selected randomly from a $t$-wise
independent hash family $\mathcal{H}$ for $t=\Omega(\log{\frac{1}{\delta}})$. Assume there is an estimation
$L_{0}\leq\tilde{L}_{0}\leq\alpha L_{0}$ for $\alpha>1$, and $K = \Omega({\log\frac{1}{\delta}})$. Then the
level $l^{*}$ for which $\frac{1}{\alpha}\tilde{L}_{0}\lambda^{l^{*}+1}(1-\lambda)<2K\leq\frac{1}{\alpha}\tilde{L}_{0}\lambda^{l^{*}}(1-\lambda)$
has $K$ to $(\frac{2\alpha}{\lambda}+1)K$ elements with probability at least
$1-\delta$.
\end{lemma}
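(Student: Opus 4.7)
The plan is to set up indicator random variables for level membership, compute the expected number of elements at the chosen level $l^*$, and then apply a high-moment concentration bound exploiting the $t$-wise independence of the hash function $h$.

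First, for each distinct value $k$ with $C_k \neq 0$, let $Y_k$ be the indicator that $h(x) \in [\lambda^{l^*+1}M,\, \lambda^{l^*}M)$, i.e.\ that $k$ is mapped to level $l^*$. Because $h$ is uniform on $[M]$, $\Pr[Y_k=1] = \lambda^{l^*}(1-\lambda)$ (up to a negligible floor error that disappears for typical $M = \Theta(m)$ and can be absorbed into constants). Let $N_{l^*} = \sum_k Y_k$ and $\mu = \mathbb{E}[N_{l^*}] = L_0\,\lambda^{l^*}(1-\lambda)$. Using the hypothesis $L_0 \leq \tilde L_0 \leq \alpha L_0$ together with the defining inequality $\frac{1}{\alpha}\tilde L_0\lambda^{l^*+1}(1-\lambda) < 2K \leq \frac{1}{\alpha}\tilde L_0\lambda^{l^*}(1-\lambda)$, I would sandwich $\mu$ as follows: the right inequality gives $\mu \geq \frac{\tilde L_0}{\alpha}\lambda^{l^*}(1-\lambda) \geq 2K$, while the left inequality gives $\mu \leq \tilde L_0\lambda^{l^*}(1-\lambda) < 2\alpha K/\lambda$.

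Next I want to show that $|N_{l^*}-\mu| \leq K$ with probability at least $1-\delta$. Since $h$ is drawn from a $t$-wise independent family with $t = \Theta(\log(1/\delta))$, and each $Y_k$ depends only on $h(x_k)$, the collection $\{Y_k\}$ is also $t$-wise independent. Applying the high-moment inequality from Section~\ref{sec:highmoments} (in the Indyk form referenced there) to the centered sum yields a bound of the shape
\[
\Pr\bigl[\,|N_{l^*}-\mu| > K\,\bigr] \;\leq\; C\left(\frac{t\mu}{K^2}\right)^{t/2}
\]
whenever $K \geq \sqrt{t\mu}$. Since $\mu = \Theta(K)$, the right-hand side is at most $C(t/K)^{t/2}$, and choosing the hidden constant in $K = \Omega(\log(1/\delta))$ large enough compared to the constant in $t = \Theta(\log(1/\delta))$ makes this quantity at most $\delta$.

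Finally I would translate the concentration statement back to the conclusion: on the event $|N_{l^*}-\mu|\leq K$, we have
\[
N_{l^*} \;\geq\; \mu - K \;\geq\; 2K - K \;=\; K,\qquad N_{l^*} \;\leq\; \mu + K \;<\; \frac{2\alpha K}{\lambda} + K \;=\; \left(\frac{2\alpha}{\lambda}+1\right)K,
\]
which matches the claim. The main obstacle I expect is calibrating the constants in the moment inequality so that the $t=\Theta(\log(1/\delta))$-th moment bound with $K=\Omega(\log(1/\delta))$ indeed beats $\delta$; this is purely a constants-chasing step but must be handled carefully to make sure the hypotheses $K\geq\sqrt{t\mu}$ and the ``$K$ large enough'' requirement are simultaneously satisfied. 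A minor secondary issue is confirming that the rounding error from the floor in $h_l(x) = \lfloor h(x)/(\lambda^l M)\rfloor$ contributes only a $1\pm o(1)$ factor to the per-element level probabilities, which is absorbed into the constant $2\alpha/\lambda + 1$ without affecting the asymptotics.
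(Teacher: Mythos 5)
Your proof is correct and follows essentially the same route as the paper: you derive the two-sided bound $2K \leq \mu < \frac{2\alpha}{\lambda}K$ from the definition of $l^*$ and the quality of $\tilde{L}_0$, and then apply the high-moment (Indyk) inequality for $t$-wise independent indicators to get $|N_{l^*}-\mu|\leq K$ with probability $1-\delta$. The paper phrases the concentration step via its Lemma~\ref{lemma:l_cheby} (writing $\mu=\beta K$ and taking deviation $\frac{1}{\beta}\mu$ with constant $\beta$), but this is algebraically the same bound you invoke directly.
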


\begin{proof} See Appendix \ref{appendix:proofs}.
\end{proof}

Let $X$ be the set of elements in level $l^{*}$, from which we choose to extract the sample. 
We denote $\tilde{K} = (\frac{2\alpha}{\lambda}+1)K$ the maximal number of
elements in level $l^{*}$. With probability at least $1-\delta$, $K \leq \left| X \right| \leq \tilde{K}$.
For typical values $\alpha = 1.5$ and $\lambda = 0.5$, $K \leq \left| X \right| \leq 7K$.

\subsection{Full Recovery Data Structure ($\FRS$)}

In this section we present the Full Recovery Data Structure ($\FRS$) that is placed in each of the levels. 
Since the recovery structure is the core of our sampling algorithm, we refer to the entire algorithm with $\FRS$ in each level as the $\FRS$ algorithm.
The sampling algorithm can be summarized to the following theorem.

\begin{theorem}	\label{theorem1}
Given a required sample size $K = \Omega(\log\frac{1}{\delta})$ and $\delta \in (0,1)$, $\FRS$ sampling algorithm generates a $\Theta(\log\frac{1}{\delta})$-wise independent sample $S$ with $1-\delta$ success probability. The sample size is $K \leq \left| S \right| \leq \tilde{K}$, where $\tilde{K} = \Theta(K)$.
For both Strict and Non-strict data streams $\FRS$ uses $O(K \log\frac{K}{\delta} \log{(mr)} \log{(m)})$ bits of space,
$O(\log\frac{K}{\delta})$ update time per element, $O(\log\frac{K}{\delta} \log{m})$ random bits and $O(K \log\frac{K}{\delta})$ time to extract the sample $S$.
\end{theorem}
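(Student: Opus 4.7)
The plan is to instantiate $\FRS$ at each of the $L = \Theta(\log m)$ levels as $R = \Theta(\log(K/\delta))$ independent arrays of $B = cK$ bins each (with $c$ a large constant), using a $t$-wise independent hash $g_i\colon [m]\to[B]$ for the $i$-th array with $t = \Theta(\log(K/\delta))$, and placing a Bin Sketch in every bin ($BS_s$ in the Strict case, $BS_{ns}$ with counter modulus $q = \Theta(RB/\delta)$ in the Non-strict case). First I would invoke Lemma~\ref{lemma:mapping}: the selected level $l^*$ has element set $X$ with $K\le|X|\le\tilde{K}$ with probability at least $1-\delta/3$, and the $\Theta(\log(1/\delta))$-wise independence of the level-mapping hash directly yields the $\Theta(\log(1/\delta))$-wise independence claimed for the returned sample $S=X$.

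For the recovery step, I would fix $x\in X$ and bound the probability it collides with $X\setminus\{x\}$ in any single array by $|X|/B\le 1/2$ for $c$ sufficiently large; then, using the high-moment inequality referenced in Section~\ref{sec:highmoments}, I would amplify this across the $R$ arrays to show that the probability $x$ is isolated in none of them is at most $\delta/(3\tilde{K})$. A union bound over $X$ gives that every element is isolated in at least one array with probability at least $1-\delta/3$. The extraction is a peeling procedure: scan the $RB$ bins, output each bin that $BS$ flags as a singleton, subtract its pair from all arrays, and repeat. Because every element of $X$ is isolated in some array from the outset and remains so under deletions, peeling always finds a fresh singleton while $X$ is non-empty and therefore recovers all of $X$.

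In the Non-strict model, the corollary of Lemma~\ref{lemma:countersq} bounds the probability that any given bin with at most $t-1$ elements is falsely identified as a singleton by $1/q$; with the choice $q=\Theta(RB/\delta)$, a union bound over the $RB$ bins caps the total false-singleton probability by $\delta/3$. Combining the three $\delta/3$ events delivers the stated $1-\delta$ success. Resource accounting then follows by summation across levels: total space $\sum_{\text{levels}} RB\cdot O(\log(mrq)) = O(K\log(K/\delta)\log(mr)\log m)$ bits; update time $O(R)=O(\log(K/\delta))$ Bin Sketch insertions per element, each an amortized constant-time polynomial evaluation via the batching noted in Section~\ref{sec:highmoments}; $O(\log(K/\delta)\log m)$ random bits to specify the $R$ hash functions shared across levels together with the level-mapping hash; and extraction time $O(RB)=O(K\log(K/\delta))$ for the peeling loop.

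The main obstacle will be the isolation argument under only $t$-wise rather than full independence. Under full independence, isolation in at least one of $R$ arrays follows from a standard exponential tail bound; under $t$-wise independence one must invoke the high-moment estimate of Section~\ref{sec:highmoments} to recover the same exponential decay in $R$, and then carefully justify that the peeling step never stalls despite the correlations across arrays induced by limited-independence hashes. The Non-strict case adds a second subtlety: one has to verify that every bin truly contains fewer than $t-1$ elements so that Lemma~\ref{lemma:countersq} applies, which requires a concentration argument on bin loads from the same $t$-wise independent hashes.
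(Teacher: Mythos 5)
Your high-level architecture (levels, arrays of bins, Bin Sketches per bin, union bounds over three failure events) is right, but two of your key design choices diverge from the paper's proof and actually break the resource bounds you are trying to establish.

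\textbf{Choice of hash family for the arrays.} The paper's $\FRS$ maps elements to bins using $\tau=\Theta(\log\frac{K}{\delta})$ hash functions drawn \emph{independently} from a \emph{pairwise} independent family, not from a $t$-wise family with $t=\Theta(\log\frac{K}{\delta})$. Pairwise independence per array already gives $\Pr[\mathcal{C}_k^a]\le 1/2$, and since the $\tau$ hash functions are chosen independently of each other, failure in all arrays factors into a product, yielding $\Pr[\text{collides in all arrays}]\le 2^{-\tau}=\delta/\tilde K$ directly -- no high-moment inequality is needed for the isolation step (that tool is only needed for the level-mapping Lemma). Your proposal to use $t$-wise hashes with $t=\Theta(\log\frac{K}{\delta})$ makes each of the $\tau$ evaluations cost $O((\log\log\frac{K}{\delta})^2)$ amortized (not $O(1)$, since batching $t$ points costs $O(\log^2 t)$ per point), giving update time $O(\log\frac{K}{\delta}\,(\log\log\frac{K}{\delta})^2)$, and requires $O(\tau\cdot t\log m)=O(\log^2\frac{K}{\delta}\log m)$ random bits; both exceed the bounds claimed in Theorem~\ref{theorem1}.

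\textbf{Non-strict recovery.} The paper's $\FRS$ in the Non-strict model does \emph{not} use $BS_{ns}$ at all -- that sketch belongs to $\AFRS$. Instead it keeps $BS_s$ in every bin (so the space bound $O(K\log\frac{K}{\delta}\log(mr)\log m)$ is the same for Strict and Non-strict), inflates the parameters to $\tau=5\log\frac{\tilde K}{\delta}$ arrays of size $s=8\tilde K$, and runs a two-phase vote: harvest candidate singletons from the first $\log\frac{\tilde K}{\delta}$ arrays, then accept a candidate only if it is detected in at least half of the remaining $4\log\frac{\tilde K}{\delta}$ arrays. The false-negative and false-positive probabilities are each bounded by $\binom{\tau'}{\tau'/2}p^{\tau'/2}$ with $p=1/8$, using union bounds over $\tilde K$ real elements and $\tilde K\log\frac{\tilde K}{\delta}$ candidates. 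Your $BS_{ns}$-with-large-$q$ route does work for $\AFRS$, but for $\FRS$ it would (a) add a $\log q$ term to every bin, changing the space bound, (b) cost an extra hash evaluation per array per element, and (c) force exactly the bin-load concentration argument you flag at the end -- which, as you note, requires high-independence hashes for the arrays and hence reintroduces the resource blow-up above. The paper's voting scheme sidesteps all of this with only pairwise-independent array hashes.

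A smaller point: peeling (your queue-based extraction) is the $\AFRS$ recovery procedure. For $\FRS$ the isolation lemma already guarantees every element sits alone in some bin at the outset, so a single scan over all $\tau s$ bins suffices; the paper then removes what it found and rescans once to verify the arrays are empty. This matches the $O(K\log\frac{K}{\delta})$ extraction time without any cascading.
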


$\FRS$ is inspired by Count Sketch \cite{charikar04}. It is composed of $\tau=O(\log{\frac{K}{\delta}})$ arrays
of size $s=O(K)$. We refer to the cells in the array as \emph{bins}.
Each input element is mapped to one bin in each of the $\tau$ arrays.
In each bin there is an instance of $BS_{s}$. We use $\tau$
hash functions drawn randomly and independently from a
pairwise independent family $\mathcal{H}=\{h \colon [m]\rightarrow[s]\}$. 
The same set of hash functions can be used for the instances of $\FRS$ in all levels.
The mapping is performed as follows.

Let $B[a,b]$ for $a\in[\tau]$ and $b\in[s]$, be the $b$'th bin in
the $a$'th array. Let the hash functions be $h_{1}\ldots h_{\tau}$.
Then $(x_{i},c_{i})$ is mapped to $B[a,h_{a}(x_{i})]$ for every
$a\in[\tau]$.
We say that two elements \emph{collide} if they are mapped to the
same bin. 

\begin{lemma} \label{Collisions in arrays} Let $\FRS$ with at most $\tilde{K}$
elements have $\tau=\log{\frac{\tilde{K}}{\delta}}$ arrays of size $s=2\tilde{K}$.
Then with probability at least $1-\delta$ for each element there is a bin
in which it does not collide with any other element. \end{lemma}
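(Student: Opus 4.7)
The plan is to carry out a standard two-level argument: first bound the single-element, single-array collision probability using pairwise independence, then amplify across the $\tau$ independent arrays, and finally apply a union bound over all elements.

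First I would fix an arbitrary element $x \in X$ and an array index $a \in [\tau]$. Since the bins in array $a$ have $s = 2\tilde{K}$ cells and $h_a$ is drawn from a pairwise independent family $\mathcal{H}$, for any other element $x' \neq x$ in $X$, $\Pr_{h_a}[h_a(x') = h_a(x)] \le 1/s$. A union bound over the at most $\tilde{K}-1$ other elements gives that the probability of $x$ colliding with some other element in array $a$ is at most $(\tilde{K}-1)/s \le 1/2$.

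Next I would exploit that the $\tau$ hash functions $h_1,\dots,h_\tau$ are drawn independently from $\mathcal{H}$. Hence the event ``$x$ collides with some other element in array $a$'' is independent across $a \in [\tau]$. Therefore the probability that $x$ collides in every one of the $\tau$ arrays is at most $(1/2)^\tau = \delta/\tilde{K}$ for the choice $\tau = \log(\tilde{K}/\delta)$.

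Finally, a union bound over the at most $\tilde{K}$ elements of $X$ shows that the probability that some element collides in all $\tau$ arrays is at most $\tilde{K} \cdot (\delta/\tilde{K}) = \delta$. Equivalently, with probability at least $1 - \delta$, every element has at least one bin (in some array) in which it is alone, which is the desired conclusion. There is no real obstacle here; the only point to flag is that the argument only requires pairwise independence of each $h_a$ plus mutual independence across the $\tau$ arrays, which matches exactly the assumptions already stated about $\FRS$.
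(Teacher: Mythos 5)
Your proof is correct and follows essentially the same route as the paper's: pairwise independence plus a union bound gives per-array collision probability at most $1/2$, independence across the $\tau$ arrays gives $(1/2)^\tau = \delta/\tilde{K}$, and a final union bound over the $\tilde{K}$ elements gives $\delta$. No gaps.
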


\begin{proof} See Appendix \ref{appendix:proofs}. \end{proof}

\begin{corollary} \label{Alg1Strict} In the Strict Turnstile model all
elements in $\FRS$ can be identified and recovered with probability at least $1-\delta$. \end{corollary}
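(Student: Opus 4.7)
The plan is to combine two ingredients that are already in hand. First, Lemma~\ref{Collisions in arrays} says that, for the chosen parameters $\tau = \log(\tilde{K}/\delta)$ and $s = 2\tilde{K}$, with probability at least $1-\delta$ every one of the (at most $\tilde{K}$) elements of $\FRS$ has at least one bin, across the $\tau$ pairwise-independent arrays, in which it sits alone. Condition on this good event for the rest of the argument.

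Second, recall the exact characterization of $BS_s$ stated in Section~\ref{3 counters}: in the Strict Turnstile model, a bin contains exactly one element \emph{if and only if} $X \neq 0$, $Y \neq 0$, $Z \neq 0$, and $XZ = Y^2$, and when those hold the element is $(Y/X, X)$. Because this is an exact deterministic $\Leftrightarrow$ (there are no false positives in the Strict setting, since non-negativity of counts prevents cancellations of the form that mislead the counters), every bin that passes the singleton test genuinely holds one element and its $(k, C_k)$ is read off correctly.

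The recovery procedure is then to scan all $\tau s$ bins, evaluate the four-counter predicate in each, and collect the recovered pairs (deduplicating when an element appears alone in more than one array). By the first ingredient every element of $X$ has at least one bin where it is alone, and that bin's predicate fires and returns exactly that element; by the second ingredient no other bin can spuriously return an element not in $X$. Therefore, on the conditioned good event, the collection output equals $X$, and this event has probability at least $1-\delta$, which is the claim.

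The only subtlety worth flagging is that in Strict Turnstile there is no probabilistic error from $BS_s$ itself, so the entire $\delta$ failure budget is absorbed by the collision bound of Lemma~\ref{Collisions in arrays}; this is precisely why the Strict case is a direct corollary and the Non-strict case will need an additional union bound over the $\tau s = O(K \log(K/\delta))$ bins against the $1/q$ false-singleton probability of $BS_{ns}$.
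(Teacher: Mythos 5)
Your proposal is correct and matches the paper's own argument: condition on the good event of Lemma~\ref{Collisions in arrays} (every element isolated in some bin), then rely on the deterministic, false-positive-free $BS_s$ singleton test in the Strict model to read off exactly those elements, so the entire $\delta$ budget is spent on the collision bound. The only thing the paper adds beyond your proof is a practical verification step (subtract all recovered elements and rescan to confirm the arrays are empty), which is an implementation detail rather than part of the probabilistic argument.
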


For recovery, we scan all bins in all arrays in $\FRS$ and use $BS_{s}$ to extract elements
from all the bins that contain a single element. According to Lemma
\ref{Collisions in arrays}, all elements in $\FRS$ can be identified
with success probability $1-\delta$. We verify success by removing all
the elements we found and scanning the arrays an additional time to
validate that they are all empty.
Removing an element $(x,c)$ is performed by inserting $(x,-c)$ to the corresponding bins.

\subsubsection{Non-strict $\FRS$}

We now present the generalization of $\FRS$ to Non-strict streams. Once
again we use $BS_{s}$, but we add to our sample only elements that are consistently
identified in multiple arrays.

\begin{lemma} \label{lemma:alg1NS} Let $\FRS$ with at most $\tilde{K}$
elements have $\tau=5\log{\frac{\tilde{K}}{\delta}}$ arrays of size
$s=8\tilde{K}$. In the Non-strict data stream model, all elements
inserted to $\FRS$ and only those elements are added to the sample
with probability at least $1-\delta$. \end{lemma}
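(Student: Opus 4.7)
The plan is to prove the two directions of the claim separately: \emph{completeness} (every element of $X$ is added to the sample) and \emph{soundness} (no other element is added). The extraction rule is to place $(k,C_k)$ in $S$ exactly when $BS_{ns}$ consistently reports it in at least a majority, say $\tau/2$, of the $\tau$ arrays; both halves of the lemma amount to justifying this rule under the parameter choices $\tau = 5\log(\tilde{K}/\delta)$ and $s = 8\tilde{K}$.

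For \emph{completeness}, I would fix an element $x \in X$ and show it is isolated in its bin in enough arrays. In array $a$, the bin $B[a,h_a(x)]$ holds $x$ alone whenever none of the at most $\tilde{K}-1$ other elements of $X$ collides with $x$ under $h_a$. Pairwise independence of $h_a$ combined with $s=8\tilde{K}$ gives an expected number of colliders $\leq(\tilde{K}-1)/s < 1/8$, hence by Markov $\Pr[\text{isolated}]\geq 7/8$; when isolated, $BS_{ns}$ reports $(x,C_x)$ correctly. Since the $\tau$ location hashes $h_1,\ldots,h_\tau$ are drawn independently, the isolation indicators across arrays are independent Bernoullis with mean $\geq 7/8$, and a Chernoff bound yields $\Pr[\text{isolated in fewer than }\tau/2\text{ arrays}] \leq 2^{-\Omega(\tau)}$. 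With $\tau=5\log(\tilde K/\delta)$ this is at most $\delta/(2\tilde K)$, and union-bounding over the $\leq\tilde K$ elements of $X$ caps the completeness failure at $\delta/2$.

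For \emph{soundness}, I would fix any candidate $(k',C_{k'})\notin X$ that could be output by some $BS_{ns}$ and bound the probability it is reported in $\geq\tau/2$ arrays. In each array $a$, $BS_{ns}$ in bin $B[a,h_a(k')]$ reports $(k',C_{k'})$ only if the four counter checks all pass; since the bin does not contain $(k',C_{k'})$ as its single element, Lemma~\ref{lemma:countersq} bounds the probability that the $T$-counter satisfies $T=C_{k'}h(k')$ by $1/q$ over the $BS_{ns}$-hash $h$. Because the $h_a$'s are independent across arrays, the $\tau$ examined bins contain essentially disjoint sets of elements, so the $\tau$ conditions on $h$ become near-independent linear constraints; by $t$-wise independence of $h$ with $t=\Theta(\log(\tilde K/\delta))$, the probability that $\geq\tau/2$ of them are simultaneously satisfied is at most $\binom{\tau}{\tau/2}(1/q)^{\tau/2}$. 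A union bound over the at most $O(\tau\tilde K)$ candidate outputs that actually appear in some bin, with $q$ taken polynomial in $\tilde K/\delta$, bounds the soundness failure by $\delta/2$.

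The delicate step is the soundness argument, because the $BS_{ns}$-hash $h$ is shared across all bins, so the per-array false-report events for a fixed candidate are not independent a priori. Making them effectively independent requires exploiting the independence of the $h_a$'s to argue that the element sets appearing in the $\tau$ examined bins are essentially disjoint, so the $\tau$ counter-equalities become $\tau$ distinct linear constraints on $h$ that can be handled by $t$-wise independence. Once this separation of randomness is carried out cleanly, summing the completeness and soundness failure probabilities yields the claimed $1-\delta$ success bound.
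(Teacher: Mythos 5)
Your approach diverges from the paper's at a structural level, and the divergence creates the very gap you flag at the end. First, the paper's Non-strict $\FRS$ keeps the \emph{Strict} Bin Sketch $BS_{s}$ in every bin --- the text says explicitly ``Once again we use $BS_{s}$'' --- not $BS_{ns}$. The point the paper exploits is that with $\tau = \Theta(\log\frac{\tilde{K}}{\delta})$ independent arrays available, consistency voting across arrays on its own drives the false-positive probability down to $\delta$; the extra $T$-counter of $BS_{ns}$ is not needed here, and is reserved for $\AFRS$, which has only two arrays and therefore cannot amplify by voting. Second, you generate candidates and vote with the same $\tau$ arrays, whereas the paper partitions the arrays into two independent groups: the first $\log\frac{\tilde{K}}{\delta}$ arrays produce a candidate set $A$ (with $|A| \le \tilde{K}\log\frac{\tilde{K}}{\delta}$, and containing all of $X$ with probability $1-\delta/2$ by Lemma~\ref{Collisions in arrays}), and the remaining $\tau' = 4\log\frac{\tilde{K}}{\delta}$ arrays cast the consistency vote. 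Because the two groups' bin-assignment hashes are independent, $A$ is a fixed set with respect to the voting randomness, which makes the union bound over $A$ in the false-positive analysis clean and non-adaptive.

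These two choices make the paper's soundness argument essentially one line: a false candidate $k' \notin X$ can even \emph{appear} to be present in array $a$ only if the bin $B[a, h_a(k')]$ is non-empty, which over $h_a$ alone has probability at most $\tilde{K}/s = 1/8$; being falsely confirmed in at least $\tau'/2$ of the $\tau'$ independent voting arrays then has probability at most $\binom{\tau'}{\tau'/2}(1/8)^{\tau'/2} < (\delta/\tilde{K})^2$, and a union bound over $A$ finishes. No shared hash is involved anywhere. In contrast, your soundness argument leans on the $T$-counter hash $h$, which is shared across all bins, and the dependency you acknowledge is a real obstruction, not a formality: the bins $B[a, h_a(k')]$ in different arrays can hold overlapping subsets of $X$, so the $\tau/2$ conditions $T_a = C_{k'}h(k')$ need not be linearly independent constraints on $h$, and whether they are is itself an adaptive function of the bin-assignment hashes. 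Turning ``essentially disjoint'' into a proof would require controlling this overlap and the total number of distinct keys touched (to stay within the $t$-wise independence budget of $h$), which is a genuine extra argument the paper simply does not need. Your completeness analysis is fine and mirrors the paper's false-negative bound; the soundness half as written does not close.
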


\begin{proof} 
We extract a set $A$ of candidate elements from all $BS_{s}$s that seem to have a single element in the first $\log\frac{\tilde{K}}{\delta}$ arrays in $\FRS$. 
$A$ contains existing elements, that were inserted to $\FRS$, and falsely detected elements that are a result of a collision.
$\left| A \right| \leq \tilde{K} \log\frac{\tilde{K}}{\delta}$.
It follows from Lemma \ref{Collisions in arrays} and Corollary \ref{Alg1Strict} that all of the existing elements can be recovered from the first $\log\frac{\tilde{K}}{\delta}$ arrays with probability $1-\delta/2$ (increasing the arrays size reduces the probability of a collision).
Hence $A$ contains all existing elements with probability $1-\delta/2$.

Next we insert to our output sample all candidates that we detect in at least half of their bins in the $\tau' = 4\log{\frac{\tilde{K}}{\delta}}$ remaining arrays of $\FRS$.
There are two types of possible errors: not reporting an existing element (false negative) and reporting a falsely detected element (false positive).

\textbf{False negative:}
We show that with high probability existing elements
are isolated in at least half of the their bins in the $\tau'$ arrays.
Let $\mathcal{C}_{k}^{a}$ be the event that element $k$
collides with another element in array $a$. 
$\Pr[\mathcal{C}_{k}^{a}]< \tilde{K}\cdot\frac{1}{s}=\frac{1}{8}$.
Let $\mathcal{C}_{k}$ be the event that element $k$ collides with
another element in at least half of the $\tau'$ arrays. The hash functions
of the different arrays are independent and therefore: 
$\Pr[\mathcal{C}_{k}] \leq {\tau' \choose \tau'/2}\Pr[\mathcal{C}_{k}^{a}]^{\tau'/2}<\left(\frac{\delta}{\tilde{K}}\right)^{2},$
where ${\tau' \choose \tau'/2}<2^{\tau'}$ is used.
Let $\mathcal{C}$ be the event that there is an existing element that collides
with another element in at least half of the $\tau'$ arrays. $\Pr[\mathcal{C}] \leq \tilde{K}\cdot\Pr[\mathcal{C}_{k}]<\tilde{K}\left(\frac{\delta}{\tilde{K}}\right)^{2} < \frac{\delta}{4}.$

\textbf{False positive:}
If a falsely detected element from $A$ is added to the sample then there is a collision in at least half of its bins in the $\tau'$ arrays.
Let $\mathcal{E}_{b}^{a}$ be the event that there is an element
in bin $b$ of array $a$. 
$\Pr[{\mathcal{E}}_{b}^{a}] \leq {\tilde{K} \choose 1}\left(\frac{1}{s}\right)=\frac{1}{8}$.
Let $\mathcal{E}_{k}$ be the event that there are elements
in the bins corresponding to a falsely detected element $k$ in at least
half of the $\tau'$ arrays. $
\Pr[\mathcal{E}_{k}] \leq {\tau' \choose \tau'/2}\Pr[{\mathcal{E}}_{b}^{a}]^{\tau'/2}<2^{\tau'}\left(2^{-3}\right)^{\tau'/2}
=2^{-0.5\tau'}
=\left(\frac{\delta}{\tilde{K}}\right)^{2}$.
Let $\mathcal{E}$ be the event that there is an element from $A$
that was falsely identified in at least half of its bins. Using the
union bound we get:
$\Pr[{\mathcal{E}}] \leq \left|A\right| \cdot\Pr[{\mathcal{E}}_{k}] < \tilde{K} \log\frac{\tilde{K}}{\delta} \cdot \left(\frac{\delta}{\tilde{K}}\right)^{2} < \frac{\delta}{4}$.

We conclude that the probability of a mistake is bounded by: 
$\delta/2 + \Pr[\mathcal{C}]+\Pr[\mathcal{E}]<\delta$.
\end{proof}

\subsection{$\epsilon$-Full Recovery Data Structure ($\AFRS$)}

In this section we present the $\epsilon$-Full Recovery Data Structure ($\AFRS$) that enables to recover almost all elements inserted to it. We refer to the entire algorithm with $\AFRS$ placed in each of the levels as $\AFRS$ algorithm. The sampling algorithm can be summarized to the following theorem.

\begin{theorem}	\label{theorem2}
Given a required sample size $K = \Omega(\frac{1}{\epsilon} \log{\frac{1}{\delta}})$, for $\delta \in (0,1)$ and \mbox{$\epsilon \in (0,1)$},
$\AFRS$ sampling algorithm generates a $(t,\epsilon)$-partial sample $S$ for $t = \Theta(\log\frac{1}{\delta})$ with $1-\delta$ success probability. The sample size is $(1-\epsilon)K \leq \left|S\right| \leq \tilde{K}$, where $\tilde{K} = \Theta(K)$.
For both Strict and Non-strict data streams $\AFRS$ requires $O((\log\log\frac{K}{\delta})^2)$ update time per element, $O(\log\frac{K}{\delta} \log{m})$ random bits and $O(K)$ time to extract the sample $S$.
The space is $O(K \log{(mr)} \log{(m)})$ bits for Strict data streams and $O(K \log{(\frac{mr}{\delta})} \log{(m)})$ bits for Non-strict streams.
\end{theorem}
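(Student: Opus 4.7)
The plan is to prove three claims for $\AFRS$: (i) its recovery procedure returns at least $(1-\epsilon)|X|$ of the elements $X$ held in the chosen level (making $S$ a $(t,\epsilon)$-partial sample); (ii) in the Non-strict case, no spurious elements are reported; and (iii) the claimed time, space, and randomness budgets hold.

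First I would describe $\AFRS$ in parallel with $\FRS$, but using only a \emph{constant} number of arrays of $\Theta(K)$ bins each, with a single $t$-wise independent hash $h\colon[m]\to[s]$ for $t=\Theta(\log\frac{K}{\delta})$ and a $BS_s$ (respectively $BS_{ns}$) in every bin. To argue the $(1-\epsilon)$-fraction recovery, let $Z_k$ indicate that element $k\in X$ fails to be isolated in every array. By pairwise independence and the constant number of arrays, $\Pr[Z_k=1]$ is at most a quantity that we tune via the bin-per-element ratio so that $E[\sum_k Z_k] \leq \epsilon|X|/2$. Since the $Z_k$ are only $t$-wise independent, I would then apply the Moment Inequality for high moments together with Indyk's estimation (Section~\ref{sec:highmoments}) to the $t$-th moment of $\sum_k Z_k$ and conclude $\Pr[\sum_k Z_k > \epsilon|X|] \leq \delta/3$. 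The hypothesis $K = \Omega(\frac{1}{\epsilon}\log\frac{1}{\delta})$ is what is needed for $\epsilon|X|$ to exceed the tail-estimate threshold from $t=\Theta(\log\frac{1}{\delta})$ moments.

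For the Non-strict model, the only extra failure mode is a $BS_{ns}$ with two or more elements being read as holding a single element. By Lemma~\ref{lemma:countersq}, this happens with probability at most $1/q$ per bin, so choosing $q=\Theta(K/\delta)$ and union-bounding over the $O(K)$ bins gives total false-positive probability $\leq \delta/3$; the inflated $q$ forces each $BS_{ns}$ to use $O(\log(mrq)) = O(\log\frac{mr}{\delta})$ bits, accounting for the extra $\log\frac{1}{\delta}$ factor in the Non-strict memory bound. Combining the level-mapping failure from Lemma~\ref{lemma:mapping}, the collision concentration, and the Non-strict false-positive bound yields total failure $\leq \delta$; the sample-size inequality $(1-\epsilon)K \leq |S| \leq \tilde{K}$ then follows from $|X|\in[K,\tilde{K}]$ and $|S|\geq (1-\epsilon)|X|$.

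The resource bounds are routine to verify: each update performs $O(1)$ bin updates plus one amortized evaluation of the $t$-wise independent hash, which via batched multipoint polynomial evaluation costs $O((\log\log\frac{K}{\delta})^2)$ time (Section~\ref{sec:highmoments}); the hash function requires $O(t\log m) = O(\log\frac{K}{\delta}\log m)$ random bits; sample extraction is a single linear scan of the $O(K)$ bins in $O(K)$ time; and the total space is $O(K)$ bins times $\log m$ levels times the per-bin cost. I expect the main obstacle to be the high-moment concentration step: we must calibrate the constants in $s$, the number of arrays, and $t$ simultaneously so that the tail bound delivers $1-\delta$ success for $(1-\epsilon)$-fraction recovery at the smallest allowed $K=\Omega(\frac{1}{\epsilon}\log\frac{1}{\delta})$, where the bound becomes tight.
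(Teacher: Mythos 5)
Your outline diverges from the paper at the crucial step, and as written the recovery argument does not work. You propose a \emph{single-pass} recovery: declare $k$ lost if it is not isolated in any of a constant number $\tau$ of arrays of size $s=\Theta(K)$, and then bound $E[\sum_k Z_k]$ by ``tuning the bin-per-element ratio.'' With $\tau$ constant and $s=\Theta(K)$, the per-element failure probability is roughly $(\tilde K/s)^\tau$, which is a fixed constant independent of $\epsilon$. To push it below $\epsilon$ you would need $s=\Omega(\tilde K\epsilon^{-1/\tau})$, so the total number of bins (and hence the space) would grow with $1/\epsilon$. That contradicts the theorem's $O(K\log(mr)\log m)$ space bound, and it is not a calibration-of-constants issue: a one-shot isolation test simply loses a constant fraction of the elements at density $\Theta(1)$.

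The paper avoids this by using exactly $\tau=2$ arrays of size $s=4\tilde K$ and a \emph{peeling (cuckoo-style) recovery}: repeatedly extract any currently isolated element, subtract it from its other bin, and enqueue that bin. The elements that cannot be recovered are precisely those in a \emph{fail set} $F$ — a subset in which every element collides only with other members of $F$ in both its bins (the 2-core of the associated bipartite graph $G(h_1,h_2,X)$). The paper then invokes the Pagh--Pagh lemma: for $t$-wise independent $h_1,h_2$ with $t\geq 32$, $\Pr[|F|\geq t]\leq n/2^{\Omega(t)}$. Taking $t=c\log\frac{\tilde K}{\delta}$ makes this $\leq\delta$, and the hypothesis $K=\Omega(\frac{1}{\epsilon}\log\frac{1}{\delta})$ is what guarantees $|F|=O(\log\frac{K}{\delta})\leq\epsilon|X|$. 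Your high-moment/Indyk concentration machinery is the wrong tool here (and is not used in this proof at all); the right tool is the cuckoo-hashing fail-set bound. Relatedly, your claim that ``sample extraction is a single linear scan'' is also off: the peeling process is iterative, and to keep it $O(K)$ the paper adds an extra counter $W=\sum c_i h_{3-a}(x_i)$ to each $BS$ so that the other bin of a recovered element can be read off without re-evaluating the $t$-wise independent hash (Lemma~\ref{lemma:alg3time}). Your Non-strict treatment is closer in spirit, but you also need to rule out ``large collisions'' ($\geq t'$ elements in one bin), where Lemma~\ref{lemma:countersq} gives no guarantee; the paper bounds this event separately before applying the $1/q$ per-inspection union bound over the $O(K)$ peeling steps.
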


$\AFRS$ is composed of $\tau=2$ arrays
of size $s=O(K)$. As in $\FRS$,
each input element is mapped to one bin in each of the arrays.
In each bin of each array we keep an instance of $BS_{s}$ or $BS_{ns}$ according to the input data stream. 
The mapping is performed using two hash functions drawn randomly and independently from a $t$-wise independent family $\mathcal{H}=\{h \colon [m]\rightarrow[s]\}$ for $t=\Theta(\log{\frac{K}{\delta}})$.

Let $X$ be the set of elements in $\AFRS$, $\left|X\right|\leq \tilde{K}$.
A \emph{fail set} $F\subseteq X$ is a set of $f$
elements, such that each element in the set collides with other
elements from the set in both its bins.
The elements in a fail set $F$ cannot be extracted from $\AFRS$.
Analyzing the existence of a fail set is similar to analyzing failure in a cuckoo hashing \cite{pagh01} insertion.
We bound the probability that there is a fail set of size $f$ using the following (revised) lemma of Pagh and Pagh \cite{pagh08}.
\begin{lemma}[\cite{pagh08},Lemma3.4]
For two functions $i_1,i_2 \colon U \rightarrow [R]$ and a set $S \subseteq U$, let $G(i_1,i_2,S)=(A,B,E)$ be the bipartite graph that has left vertex set $A = \{a_1,\ldots,a_R\}$, right vertex set $B = \{b_1,\ldots,b_R\}$ and edge set $E = \{e_x \mid x \in S \}$, where $e_x = (a_{i_1(x)},b_{i_2(x)})$.

For each set $S$ of size $n$, and for $i_1,i_2:U \rightarrow [4n]$ chosen at random from a family that is $t$-wise independent on $S$, $t \geq 32$, the probability that the fail set $F$ of the graph $G(i_1,i_2,S)$ has size at least $t$ is $n/{2^{\Omega(t)}}$.
\end{lemma}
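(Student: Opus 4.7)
My plan is to view the fail set $F$ as (a subset of) the $2$-core of the bipartite graph $G(i_1,i_2,S)$ and bound $\Pr[|F|\ge t]$ by a union bound over combinatorial witnesses, carefully using the $t$-wise independence of $i_1,i_2$ on $S$. The first step is the structural observation: if $F$ is a fail set of size $f$, then every edge $e_x$ with $x\in F$ has both endpoints shared with another edge of $F$, so the induced bipartite subgraph $G_F$ has minimum degree at least $2$. A degree-sum argument then gives $|V_L(G_F)|+|V_R(G_F)|\le |E(G_F)|=f$; that is, the $f$ edges of $F$ are supported on at most $f$ distinct vertices in total.

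Next I would union-bound. For each size $f\ge t$ and each subset $F\subseteq S$ with $|F|=f$, I want to bound $\Pr[F\text{ is a fail set}]$ by the probability that the corresponding $2f$ hash values land on at most $f$ distinct vertices. Since only $t$-wise independence is available, I would restrict attention to a canonical size-$t$ subset $T\subseteq F$ and use $t$-wise independence of $i_1$ and of $i_2$ separately on $T$ to bound the probability that $i_1(T)$ and $i_2(T)$ are confined to target sets of sizes $v_L,v_R$ summing to at most $t$. Summing over target set choices yields a per-subset bound
\[
\sum_{v_L+v_R\le t}\binom{4n}{v_L}\binom{4n}{v_R}\left(\frac{v_L v_R}{16n^2}\right)^{t},
\]
which an elementary estimate (dominated by $v_L,v_R\approx t/2$, with $\binom{4n}{v}\le (4en/v)^v$) shows is at most $2^{-ct}$ for a universal constant $c>0$. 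Multiplying by the union-bound factor $\binom{n}{t}\le (en/t)^t$ and summing the resulting geometric tail over $f\ge t$ gives the claimed $n/2^{\Omega(t)}$.

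The main obstacle is the reduction from an arbitrary fail set $F$ of size $f\ge t$ to an event about only $t$ hash evaluations, since $t$-wise independence does not control all $f$ hashes at once, and moreover the $2$-core of size $\ge t$ need not contain a sub-$2$-core of size exactly $t$ (witness: a single cycle of length $t+1$). The delicate step is to argue that restricting to any canonical size-$t$ subset $T\subseteq F$ still forces $i_1(T)\cup i_2(T)$ into a set of size at most $t$ whenever $F$ is a fail set, because the endpoints of $T$-edges are covered at least twice within $F$ and hence inside the $\le f$ vertices used by $F$; a slightly stronger ``minimal excess'' selection of $T$ inside the $2$-core makes this precise. This is where the constant $t\ge 32$ enters, as it is the threshold above which the exponential decay $2^{-ct}$ dominates the $(en/t)^t$ union-bound overhead, leaving a $1/2^{\Omega(t)}$ factor after multiplication, with the outer factor of $n$ coming from summing $\binom{n}{f}$ for $f\ge t$.
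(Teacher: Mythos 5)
The paper cites this as Lemma~3.4 of Pagh and Pagh \cite{pagh08} and does not reprove it, so there is no in-paper argument to compare against; your route (two-core witness plus union bound under limited independence) is the standard one and is in the spirit of the original. The gap is in the step you yourself flag as delicate. Your justification---that $|i_1(T)\cup i_2(T)|\le t$ because the endpoints of $T$-edges are ``covered at least twice within $F$ and hence inside the $\le f$ vertices used by $F$''---only shows those vertices lie in the vertex set of $F$, which can be far larger than $t$ when $f\gg t$; it gives no bound on $|i_1(T)\cup i_2(T)|$ at all. The fix is an explicit exploration rule: since the two-core has minimum degree two it contains a cycle; start a BFS at a cycle vertex and collect edges one at a time, each new edge introducing at most one new vertex, so after $t$ edges $T$ is connected and spans at most $t+1$ vertices. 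If $F$ is disconnected, take whole small components (each satisfying $|V|\le|E|$) plus one such partial exploration, and the $t+1$ bound still holds.

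Note also that the sharp bound is $t+1$, not $t$---your own cycle-of-length-$(t+1)$ witness shows $t$ consecutive edges span $t+1$ vertices---and this single extra vertex is precisely where the outer factor of $n$ in $n/2^{\Omega(t)}$ enters the arithmetic (one extra $\binom{4n}{1}$), not from ``summing $\binom{n}{f}$ for $f\ge t$'' as you write; once a size-$t$ connected witness is fixed there is nothing further to sum over. With the corrected selection rule and the $t+1$ vertex budget, the rest of your computation---$\binom{n}{t}$ key choices, $\binom{4n}{v_L}\binom{4n}{v_R}$ bin choices with $v_L+v_R\le t+1$, the factor $(v_Lv_R/16n^2)^t$ from $t$-wise independence of $i_1$ and $i_2$, and Stirling-type estimates dominated near $v_L\approx v_R\approx t/2$---does indeed give $n\cdot 2^{-\Omega(t)}$, with $t\ge 32$ leaving room for the rather tight constants.
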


\begin{corollary}
Let $\AFRS$ with at most $\tilde{K}$ elements have 2 arrays of size $s = 4\tilde{K}$.
Let the mapping be performed by two $t$-wise independent hash functions for $t=c \log\frac{\tilde{K}}{\delta}$, constant $c$ and $t \geq 32$.
The probability that there is a fail set of size at least $t$ is bounded by $\delta$.
\end{corollary}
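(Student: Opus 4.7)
The statement is a direct consequence of the Pagh–Pagh lemma quoted just above, so the plan is essentially to match parameters and translate between the two settings. Let $X \subseteq [m]$ denote the set of elements currently stored in $\AFRS$, with $n = |X| \le \tilde{K}$, and let $h_1, h_2 \colon [m] \to [s]$ with $s = 4\tilde{K}$ play the role of $i_1, i_2$. To conform with the lemma's requirement that $R = 4n$ exactly, I would pad $X$ with $\tilde{K} - n$ arbitrary fresh universe elements to form $X' \supseteq X$ of size exactly $\tilde{K}$. Any $\AFRS$-fail set of $X$ is still a fail set of $X'$ (adding elements cannot break the collisions among the original ones), so this padding only upper bounds the probability of interest.

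Next I would verify that the $\AFRS$ notion of a fail set coincides with the graph-theoretic fail set in the Pagh–Pagh lemma. Encode each element $x \in X'$ as the edge $e_x = (a_{h_1(x)}, b_{h_2(x)})$ of the bipartite multigraph $G(h_1, h_2, X')$. An element $x$ of a subset $F$ cannot be recovered by $\AFRS$ precisely when both of its bins are shared with other elements of $F$, which is the same as saying both endpoints of $e_x$ are incident to other edges of $F$. Hence the two fail-set notions agree, and the $t$-wise independence hypothesis of the lemma is inherited from the $t$-wise independence of $h_1, h_2$ on $[m]$.

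Applying the Pagh–Pagh lemma with $n$ replaced by $\tilde{K}$ and our $t \ge 32$ then yields
\[
\Pr[\exists \text{ fail set of size } \ge t] \;\le\; \tilde{K}/2^{\Omega(t)} \;=\; \tilde{K}\cdot 2^{-\gamma t}
\]
for some absolute constant $\gamma > 0$ hidden in the $\Omega$. Substituting $t = c\log(\tilde{K}/\delta)$ gives $\tilde{K}\,(\delta/\tilde{K})^{\gamma c}$, which is at most $\delta$ as soon as $\gamma c \ge 2$ (using $\tilde{K}/\delta \ge 1$). Choosing the constant $c$ large enough so that both $\gamma c \ge 2$ and $t \ge 32$ hold in the relevant regime closes the argument.

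I do not anticipate a serious obstacle: the corollary is a straightforward repackaging of the Pagh–Pagh bound for our parameter choices. The two places that warrant care are (i) the padding step that reconciles $s = 4\tilde{K}$ with the lemma's $R = 4n$, and (ii) picking the constant $c$ compatibly with both the implicit $\Omega$-constant coming from the lemma and the $t \ge 32$ side condition.
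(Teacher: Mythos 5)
Your proposal is correct and takes essentially the same route as the paper: the paper's (one-line) proof also reduces to the worst case of $\tilde{K}$ elements via monotonicity and then reads off the Pagh--Pagh bound $\tilde{K}/2^{\Omega(t)}\le\delta$ for $t=c\log(\tilde{K}/\delta)$. You are merely more explicit about the two technicalities the paper glosses over --- the padding that reconciles $s=4\tilde{K}$ with the lemma's $R=4n$, and the identification of the $\AFRS$ fail-set notion with the 2-core of the bipartite graph.
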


\begin{proof}
The more elements in $\AFRS$, the higher the probability that there is a fail set of some fixed predefined size.
The probability is $\tilde{K} / {2^{c' c \log{\frac{\tilde{K}}{\delta}}}} \leq \delta$ for some constants $c$, $c'$.
\end{proof}

The following algorithm identifies all elements in $\AFRS$ that do not belong to a fail set.

\begin{enumerate}[1.]
\item Initialize the output sample $S = \emptyset$ and a queue $Q = \emptyset$.
\item Scan the two arrays in $\AFRS$. For each bin $b$, if $BS$ holds a single element, $Enqueue(Q,b)$.
\item While $Q \neq \emptyset$:
\begin{enumerate}[3.1.]
\item $b \leftarrow Dequeue(Q)$. If $BS$ in $b$ holds a single element:
	\begin{enumerate}[3.\mbox{1}.1.]
	\item Extract the element $(k,C_k)$ from $BS$ in $b$.
	\item $S = S \cup \{(k,C_k)\}$.
	\item Subtract $(k,C_k)$ from $BS$ in $\tilde{b}$, where $\tilde{b}$ is the other bin $k$ is hashed to. 
	\item	$Enqueue(Q,\tilde{b})$.
	\end{enumerate}
	\end{enumerate}
\nopagebreak
\item Return $S$.
\end{enumerate}

\begin{lemma} \label{lemma:alg3works}
All elements that do not belong to a fail set are identified by the algorithm.
\end{lemma}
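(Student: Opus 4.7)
My approach is to view $\AFRS$ as a bipartite multigraph $G$ whose left and right vertex sets are the bins of the two arrays, with every element $x \in X$ corresponding to the edge joining the two bins $x$ is hashed to. In this picture the algorithm is the classical \emph{peeling} procedure: repeatedly pick a degree-one vertex, extract its unique incident edge, and delete that edge. The elements that remain unextracted are then the edges of the $2$-core of $G$, which by inspection satisfy the definition of a fail set; the contrapositive yields the lemma.

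First I would record the easy direction: whenever the algorithm extracts an element in step~3.1.1, the guard in step~3.1 guarantees that $BS$ really sees a single element in the bin, so every individual extraction is legitimate and no spurious element is produced.

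The heart of the proof is to show that the set $R$ of elements still present in $\AFRS$ at termination is a fail set. I would argue by contradiction: suppose some $x \in R$ has a bin $b$ in which $x$ is the only member of $R$. Every element of $b$ not in $R$ was at some earlier time subtracted from $b$ in step~3.1.3. Consider the last moment at which $b$ was modified — either such a subtraction or, if $b$ never lost an element during the run, the initial scan in step~2. Immediately after this moment $b$ contains only elements of $R$, and therefore only $x$, so $b$ is a singleton bin, and the algorithm enqueues $b$ at precisely that moment. Between this enqueueing and the eventual dequeueing of $b$, no further subtraction can affect $b$: any such subtraction would have to remove the only element $x$ present in $b$, but $x \in R$ is never extracted by assumption. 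When $b$ is finally dequeued, $BS$ still reports a single element, so $x$ is extracted in step~3.1.1, contradicting $x \in R$. Hence every $x \in R$ collides in both of its bins with other members of $R$, i.e., $R$ is a fail set; consequently any element belonging to no fail set lies outside $R$ and is identified by the algorithm.

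The main obstacle is precisely this queue-discipline argument. One has to be careful that each bin is captured at the instant it first becomes a singleton, that repeated enqueueings (at initialization and again at step~3.1.4) do not cause the argument to miss this instant, and that interleaved subtractions performed on other bins cannot invalidate the singleton status of $b$ between its enqueueing and its processing. The ``last modification of $b$'' framing above keeps these book-keeping concerns clean, and is the reason I pinpoint that specific moment rather than reasoning globally about the queue.
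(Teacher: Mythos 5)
Your proof is correct and follows essentially the same argument as the paper: if an unidentified element were isolated in some bin, that bin would become a singleton, be enqueued (either in the initial scan or when the last other element was subtracted), eventually be dequeued, and the element would be extracted, a contradiction. Your 2-core/peeling framing and the ``last modification of $b$'' book-keeping make the termination-time argument a bit more explicit than the paper's iterative phrasing, but the underlying mechanism is identical.
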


\begin{proof} See Appendix \ref{appendix:proofs}. \end{proof}

\begin{lemma} \label{lemma:alg3time}
The recovery algorithm takes $O(K)$ time.
\end{lemma}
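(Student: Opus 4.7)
The plan is to bound the total work by separately bounding the cost of the initial scan (Step~2) and the cost of the queue-based main loop (Step~3), and then to observe that every per-iteration operation in the loop is $O(1)$ while the total number of iterations is $O(K)$.

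First, for the initial scan, I would note that $\AFRS$ has $\tau = 2$ arrays of size $s = O(K)$, so there are $2s = O(K)$ bins to visit. Each bin holds a single $BS$ instance, and determining whether $BS$ currently encodes a single element is $O(1)$ (just the counter checks from Section~\ref{3 counters}, i.e. $X,Y,Z \neq 0$ and $XZ = Y^2$ plus, in the Non-strict case, the additional $T$ check). Hence Step~2 contributes $O(K)$ work and enqueues at most $O(K)$ bins.

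Next, for the main loop, I would argue that each iteration does a constant amount of work: one $BS$ check, at most one extraction and one subtraction (both $O(1)$ on the counters), plus the two hash evaluations needed to locate $\tilde b$, which we treat as $O(1)$ amortized as established in Section~\ref{sec:highmoments}. Then I would bound the total number of iterations by the total number of enqueues. Enqueues occur in two places: Step~2 contributes $O(K)$, and Step~3.1.4 contributes exactly one enqueue per successful extraction. The key observation is that every element of $X$ can be extracted at most once: after an extraction in Step~3.1.1--3.1.3, the element has been subtracted from both of its bins, so no bin will subsequently see it as a unique entry. Since $|X| \leq \tilde K = O(K)$, at most $O(K)$ extractions (hence Step~3.1.4 enqueues) occur.

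Combining these two accounts, the total number of dequeues is $O(K)$, and each is handled in $O(1)$ amortized time, so the entire recovery runs in $O(K)$ time. I expect the only subtle point, rather than an obstacle, to be justifying the charge of "one extraction per element": one must check that even when a bin is dequeued but no longer holds a single element (its content having been changed by subtractions from prior iterations), the guard in Step~3.1 correctly skips it in $O(1)$, so such ``wasted'' dequeues do not inflate the iteration count beyond the total enqueue count already bounded above.
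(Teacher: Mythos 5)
Your decomposition of the work into the initial scan plus the queue loop, and the bounding of total enqueues/dequeues by $O(K)$, matches the paper's argument and is correct. The gap is in the cost you assign to ``the two hash evaluations needed to locate $\tilde b$.'' You write that these are ``$O(1)$ amortized as established in Section~\ref{sec:highmoments},'' but that is not what Section~\ref{sec:highmoments} establishes: the mapping hash functions used in $\AFRS$ are $t$-wise independent for $t = \Theta(\log\frac{K}{\delta})$, and the fast multipoint evaluation technique gives $O(\log^2 t) = O\bigl((\log\log\frac{K}{\delta})^2\bigr)$ amortized time per evaluation, not $O(1)$. (And even that amortization requires batching $t$ evaluations, which is awkward inside a queue-driven loop where you discover the points to evaluate one at a time.) With hash evaluations in the loop, the honest bound is $O\bigl(K (\log\log\frac{K}{\delta})^2\bigr)$, which does not prove the lemma.

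The paper closes this gap with an extra ingredient you have not supplied: it augments each $BS$ with an additional linear counter $W$ updated as $W \leftarrow W + c_i h_{3-a}(x_i)$ when an element is inserted into array $a$. When a bin in array $a$ is found to hold a single element $(k,C_k)$, one has $W = C_k h_{3-a}(k)$, so the companion bin index $h_{3-a}(k) = W / C_k$ is obtained by a single division, with no hash evaluation at recovery time. The extra counter adds only $O(1)$ update time and $O(\log(mr))$ bits per bin, so it does not change the other claimed bounds, and it is precisely what brings the loop iteration cost down to genuine $O(1)$ so that the total is $O(K)$. Without this (or some equivalent device), your proof establishes only the weaker $O\bigl(K (\log\log\frac{K}{\delta})^2\bigr)$ bound.
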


\begin{proof} 
If the algorithm is implemented with hash computations for finding the other bin an element is hashed to, it takes $O(K {(\log\log\frac{K}{\delta})}^2)$ time. Using an additional counter in each $BS$ reduces the time to $O(K)$.
See Appendix \ref{appendix:proofs} for the complete proof. \end{proof}

Let $X$ be the elements in $\AFRS$, $K \leq \left| X \right| \leq \tilde{K}$, $\tilde{K} = \Theta(K)$.
In order to recover all but $\epsilon \left| X \right|$ of the elements we require $K = \Omega(\frac{1}{\epsilon} \log{\frac{1}{\delta}})$.
If $K$ is smaller, we recover all but at most $O(\max{\{\epsilon \left|X \right|, f\}})$ of the elements, where $f=O(\log{\frac{K}{\delta}})$ is the size of the fail set.

\subsubsection{Non-strict $\AFRS$}
In the Non-strict Turnstile model we keep $BS_{ns}$ in each bin, and we set the range of the hash function to $q=\Theta(\frac{K}{\delta})$ and its independence to $t'=\Theta(\log{\frac{K}{\delta}})$, the same as the independence of the hash functions we use when mapping to the bins in $\AFRS$.
The same hash function can be used for all $BS_{ns}$s.
Recall that if $BS_{ns}$ contains a single element, this element is extracted successfully.
If $BS_{ns}$ contains less than $t'$ elements, an event called a \emph{small collision}, the probability of an error is at most $1/q$.
If $BS_{ns}$ contains $t'$ elements or more, an event called a \emph{large collision}, we do not have a guarantee on the probability of an error.

\begin{lemma}
Let $\AFRS$ with at most $\tilde{K}$ elements have 2 arrays of size $s = 4\tilde{K}$.
Let the mapping be performed by two $t$-wise independent hash functions for $t=2\log{\frac{\tilde{K}}{\delta}}$.
Let each bin contain $BS_{ns}$ with $q=\frac{4\tilde{K}}{\delta}$ and $t' = t$.
The probability of no false detections during the entire recovery process is at least $1-\delta$.
\end{lemma}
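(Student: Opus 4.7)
The plan is to decompose the ``some false detection occurs'' event into two sub-events: $\mathcal{L}$, that at some point during recovery some bin holds $t$ or more elements (a ``large collision'', where Lemma~\ref{lemma:countersq} gives no guarantee), and $\mathcal{M}$, that every bin stays below $t$ throughout but some individual call to $BS_{ns}$ nevertheless spuriously reports a single element. A union bound will then give $\Pr[\mathcal{L}] + \Pr[\mathcal{M}] \leq \delta$ with each piece $\leq \delta/2$.

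For $\mathcal{L}$, first observe that step 3.1.3 of the recovery algorithm only ever subtracts elements from bins; no element is ever inserted during recovery. Hence the multiset of elements in every bin only shrinks over time, and it suffices to bound the probability that some bin already starts with $\geq t$ elements. Since $|X| \leq \tilde K$ and $s = 4\tilde K$, the expected load of any fixed bin under the $t$-wise independent mapping hash is at most $1/4$. The high-moment inequality from Section~\ref{sec:highmoments} (Appendix~\ref{appendix:chebyshev}) bounds the probability that a fixed bin has load $\geq t$ by roughly $(Ce/t)^{t}$ for a small absolute constant $C$. With $t = 2\log(\tilde K/\delta)$ this is far smaller than $\delta/(16\tilde K)$, so a union bound over the $2s = 8\tilde K$ bins gives $\Pr[\mathcal{L}] \leq \delta/2$.

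For $\mathcal{M}$, condition on every bin holding strictly fewer than $t$ elements at every time during the recovery. Then every check of some bin by the algorithm satisfies the hypothesis of Lemma~\ref{lemma:countersq} (at most $t-1$ contributing elements, and the $BS_{ns}$ hash is $t'=t$-wise independent and drawn independently of the mapping hashes). Each such check therefore produces a spurious single-element identification with probability at most $1/q = \delta/(4\tilde K)$. The algorithm inspects the $2s = 8\tilde K$ bins once during the initial sweep of step 2 and re-inspects a bin only when it is enqueued in step 3.1.4 (at most once per successfully extracted element), so there are at most $9\tilde K = O(\tilde K)$ inspections in total. A union bound gives $\Pr[\mathcal{M}] = O(\delta)$, and by taking $q$ to be a slightly larger constant multiple of $\tilde K/\delta$ (still consistent with $q = \Theta(K/\delta)$) this is tightened to $\delta/2$.

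The main obstacle is producing sharp enough constants in the large-collision bound: the tail $(Ce/t)^{t}$ must be small enough, once multiplied by the union-bound factor $8\tilde K$, to stay below $\delta/2$, which forces the constant in $t = 2\log(\tilde K/\delta)$ to be calibrated against the absolute constant $C$ coming out of the moment inequality. This is a routine but slightly delicate calculation; everything else reduces to the two clean union bounds above plus the observation that recovery is monotone (bins only lose elements), which is what lets me reduce the large-collision analysis to a statement about the initial configuration.
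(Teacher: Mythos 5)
Your decomposition into a ``large collision'' event $\mathcal{L}$ and a residual ``small collision'' event $\mathcal{M}$, each bounded by $\delta/2$, is exactly the structure of the paper's proof. The monotonicity observation -- that step~3.1.3 only removes elements so every bin's contents shrink over time, hence the load at the start of recovery is the maximum -- is a nice explicit statement of something the paper uses implicitly, and it is the right way to reduce the large-collision analysis to a static statement about the initial hashing.

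The one place where you genuinely diverge, and where there is a real error, is the tail bound for $\mathcal{L}$. You invoke the high-moment inequality of Appendix~\ref{appendix:chebyshev} and assert a per-bin tail of roughly $(Ce/t)^{t}$. That exponent does not come from the moment inequality: the bound in the appendix is $\Pr[|Z-E[Z]|\geq \alpha E[Z]] \leq \frac{48l}{\alpha}\bigl(\frac{6l}{\alpha^{2}E[Z]}\bigr)^{(l-1)/2}$, so with $E[Z]\leq 1/4$, $\alpha\approx 4t$, $l=t$ you get $\approx 12\bigl(\frac{3}{2t}\bigr)^{(t-1)/2}$ -- the exponent is $(t-1)/2$, not $t$. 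The expression $(Ce/t)^{t}$ you wrote is actually what the direct combinatorial union bound gives, and that is what the paper uses: $\Pr[\mathcal{E}_{b}]\leq\binom{|X|}{t'}\bigl(\frac{1}{s}\bigr)^{t'}\leq\bigl(\frac{e}{4t'}\bigr)^{t'}<\bigl(\frac{\delta}{\tilde{K}}\bigr)^{2}$, exploiting that the mapping hash is exactly $t'$-wise independent so any fixed $t'$ elements land in bin $b$ together with probability $s^{-t'}$. The moment-inequality route can still be pushed through (with $t\geq 6$ one gets $\bigl(\frac{3}{2t}\bigr)^{(t-1)/2}\leq(\delta/\tilde{K})^{2}$), but the constants are worse and it is the ``slightly delicate calculation'' you flagged; the combinatorial bound dodges this entirely and is the intended argument.

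For $\mathcal{M}$ your counting is slightly off but harmless: a bin enqueued in step~2 is dequeued and re-inspected in step~3.1, which your $8\tilde K + \tilde K$ tally omits; the correct bound is a somewhat larger $O(\tilde K)$. The paper instead counts only inspections of bins that actually contain a collision (inspections of empty or singleton bins are deterministic and can never fire spuriously), obtaining the tighter $2\tilde K$ that makes $q=4\tilde K/\delta$ give exactly $\delta/2$. You correctly note this is absorbed by a constant adjustment to $q$, so the overall conclusion stands, but you should state the bound in terms of inspections of colliding bins to match the hypothesis of Lemma~\ref{lemma:countersq}.
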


\begin{proof}
First we bound the probability of a large collision in $\AFRS$. Let $\AFRS$ have $\left|X\right| \leq \tilde{K}$ elements.
Let $\mathcal{E}_{b}$ be the event that there is a large collision in bin $b$.
 Since $t=t'$, every $t'$ elements that appear in a large collision are mapped there independently.
Thus, $Pr[\mathcal{E}_{b}] \leq {X \choose t'} \left(\frac{1}{s}\right)^{t'} 
\leq \left(\frac{eX}{t'}\right)^{t'} (\frac{1}{4\tilde{K}})^{t'}
\leq \left(\frac{e}{4t'}\right)^{t'} < {(\frac{\delta}{\tilde{K}})}^{2}$.
Using the union bound, the probability that there is a large collision in any bin is at most $\delta/2$.
Hence we can consider only small collisions.

The total number of inspections of bins with collisions during the recovery process is at most $2\tilde{K}$. Therefore
the probability of detecting at least one false element as a result of a small collision is at most
$2\tilde{K}\frac{1}{q}=\delta/2$, 
and the probability of any false detections is $\delta$.
\end{proof}

\begin{corollary}
If $BS_{ns}$ with $q=\Theta(\frac{K}{\delta})$ and $t'=\Theta(\log{\frac{K}{\delta}})$ is placed in each bin, the recovery procedure of the Strict Turnstile model can be used also for Non-strict Turnstile data streams with $1-\delta$ success probability.
\end{corollary}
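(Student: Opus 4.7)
The plan is to read this corollary as a direct repackaging of the preceding lemma together with the correctness analysis already carried out for the Strict recovery procedure. The key observation is that the queue-based peeling algorithm of the previous subsection interacts with each bin only through two questions: ``does $BS$ currently hold a single element?'' and ``if so, which one?''. If these questions are answered truthfully at every invocation, the execution on a Non-strict $\AFRS$ is indistinguishable from a Strict execution on the same collision pattern, and Lemmas~\ref{lemma:alg3works} and~\ref{lemma:alg3time} apply verbatim.

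Accordingly, I would separate the failure analysis into two independent sources and take a union bound. The first is the structural failure from the mapping phase: by the fail-set corollary for $\AFRS$, with probability at least $1-\delta$ the set of elements unreachable by peeling has size below the promised threshold. This bound depends only on the $t$-wise independent mapping hash functions and is identical in the Strict and Non-strict settings. The second is the $BS_{ns}$ failure mode unique to the Non-strict case: the preceding lemma already establishes that, with $q=\Theta(K/\delta)$ and $t'=\Theta(\log(K/\delta))$, no false single-element detection occurs at any point during the recovery with probability at least $1-\delta$.

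Conditioning on the complement of both bad events (total probability $1-O(\delta)$, with the constants absorbed into the $\Theta$ bounds), every answer given by $BS_{ns}$ to the peeling routine is correct, so the procedure produces the same output it would on a Strict stream with the same hash mapping. The only piece of bookkeeping to check is that the hypothesis of Lemma~\ref{lemma:countersq} (at most $t'-1$ elements per bin) remains valid throughout the recovery; since peeling only subtracts elements, bin occupancy is monotone non-increasing, so the exclusion of large collisions proved in the preceding lemma persists for the entire recovery. No real obstacle arises beyond this monotonicity remark.
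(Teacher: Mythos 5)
Your proposal is correct and matches the paper's intent: the paper gives no separate argument for this corollary, treating it as an immediate consequence of the preceding lemma on false detections combined with the already-established Strict-mode recovery analysis. Your decomposition into the mapping-phase fail-set event and the $BS_{ns}$ false-detection event, with a union bound, is exactly the reasoning the paper leaves implicit. The monotonicity remark at the end is a useful addition: the preceding lemma bounds large collisions only for the initial bin contents, and the reason that bound persists through every inspection of the peeling loop is that, conditioned on no false detection having occurred yet, each step only removes a genuine element from a bin, so occupancy never increases and the hypothesis of Lemma~\ref{lemma:countersq} stays satisfied. The paper does not spell this out, so your observation makes the induction underlying the ``$2\tilde{K}$ inspections, each with error $\leq 1/q$'' union bound explicit rather than tacit.
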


\section{Applications and Extensions} \label{sec:apps}

\paragraph*{Inverse Distribution}

The samples generated by the algorithms $\FRS$ and $\AFRS$ can be used to derive an additive $\epsilon$-approximation with $1-\delta$ success probability for various forward and inverse distribution queries.
For example, consider \emph{Inverse point queries}, which return the value of $f^{-1}(i)$ for a query frequency
$i$. The samples from $\FRS$ and $\AFRS$ can be used to obtain an approximation in $[f^{-1}(i) - \epsilon, f^{-1}(i) + \epsilon]$ for every frequency
$i$.
We can approximate Inverse range queries, Inverse
heavy hitters and Inverse quantiles queries in a similar way.

\begin{lemma} \label{theorem:inverse}
Let $S$ be a $(t,\epsilon')$-partial sample of size $\left| S \right| = \Omega(\frac{1}{\epsilon^{2}}\log\frac{1}{\delta})$ for $\epsilon \in (0,1)$, $\epsilon'=\Theta(\epsilon)$, $\delta \in (0,1)$, and $t = \Omega(\log\frac{1}{\delta})$.
The estimator 
$
{f^{-1}}(i) \approx \frac{\left|\{k \in S \colon C_k = i \} \right|}{\left| S \right|}
$
provides 
an additive $\epsilon$-approximation to the inverse
distribution with probability at least $1-\delta$. 
\end{lemma}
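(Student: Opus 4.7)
The plan is to split the estimator error into two contributions: a \emph{sampling error} coming from the $t$-wise independent sample $X$ that underlies the $(t,\epsilon')$-partial sample, and a \emph{truncation error} coming from passing to $S \subseteq X$.

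First, I would analyze the ideal estimator $Y/|X|$, where $Y = |\{k \in X \colon C_k = i\}|$. By the definition of a $t$-wise independent sample, for each $k \in D$ the indicator $\mathbf{1}[k \in X]$ has marginal $|X|/|D|$, and any $t$ of these indicators are independent; hence $\mathbb{E}[Y] = |X| \cdot f^{-1}(i)$. Since $|X| \geq |S| = \Omega(\tfrac{1}{\epsilon^{2}} \log\tfrac{1}{\delta})$ and $t = \Omega(\log\tfrac{1}{\delta})$, I would invoke the Indyk-style high-moment inequality cited in Section~\ref{sec:highmoments} (with the appropriate scaling of the centered $t$-th moment of a sum of $t$-wise independent Bernoullis) to obtain
\[
\Pr\!\left[\left|\tfrac{Y}{|X|} - f^{-1}(i)\right| > \epsilon/2\right] \leq 2^{-\Omega(t)} \leq \delta.
\]

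Second, I would show that passing from $X$ to $S$ perturbs the ratio by at most $O(\epsilon')$, \emph{regardless} of which elements were dropped. Letting $Y_S = |\{k \in S \colon C_k = i\}|$, the inclusions $S \subseteq X$ with $(1-\epsilon')|X| \leq |S| \leq |X|$ give $Y - \epsilon'|X| \leq Y_S \leq Y$ and therefore
\[
\tfrac{Y}{|X|} - \epsilon' \;\leq\; \tfrac{Y_S}{|S|} \;\leq\; \tfrac{Y}{(1-\epsilon')|X|} \;\leq\; \tfrac{Y}{|X|} + \tfrac{\epsilon'}{1-\epsilon'},
\]
using $Y/|X| \le 1$ in the last step. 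Choosing the hidden constant in $\epsilon' = \Theta(\epsilon)$ small enough so that this contribution is at most $\epsilon/2$, and combining with the first bound via the triangle inequality, yields an additive $\epsilon$-approximation with probability $\geq 1-\delta$.

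The main obstacle is the clean invocation of the $t$-wise independent concentration inequality in the \emph{additive} rather than multiplicative form: the expectation $\mathbb{E}[Y] = |X| f^{-1}(i)$ can be arbitrarily small (for a rare frequency $i$), which rules out a purely multiplicative Chernoff-style statement. I would therefore use the variance $\mathrm{Var}(Y) \leq |X| \cdot f^{-1}(i)(1 - f^{-1}(i)) \leq |X|/4$ as the base moment, expand via the Moment Inequality for the $t$-th centered moment and the bound in~\cite{indyk01}, and verify that $|X| = \Omega(\tfrac{1}{\epsilon^2}\log\tfrac{1}{\delta})$ together with $t = \Omega(\log\tfrac{1}{\delta})$ makes the tail $2^{-\Omega(t)}$ dominate $\delta$. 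The rest of the argument is routine bookkeeping with the partial-sample inequalities above.
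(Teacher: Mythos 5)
Your overall plan---decompose the error into a sampling term for the ideal $t$-wise sample $X$ and a deterministic truncation term for $S\subseteq X$, then combine---matches the paper's proof. The truncation part of your argument is correct and essentially identical to the paper's (which sets $\epsilon' = \epsilon/3$, $\delta' = \delta/3$). You also correctly flag the real subtlety that the paper glosses over: $E[F_i]$ can be tiny when $f^{-1}(i)$ is small, so one must work in additive form with the moment bounded via $E[|X|]$ (or the variance) rather than $E[F_i]$.

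However, the first half of your proof has a genuine gap in how it treats $|X|$. You write ``the indicator $\mathbf{1}[k\in X]$ has marginal $|X|/|D|$'' and ``$\mathbb{E}[Y]=|X|\cdot f^{-1}(i)$'' and later ``$\mathrm{Var}(Y)\leq |X|/4$,'' all of which treat $|X|$ as a fixed quantity. But $|X|$ is a random variable (in the paper it is the random number of elements that hash to level $l^*$), and these identities only make sense conditionally on $|X|$. The trouble is that conditioning on the sum $|X|=\sum_k\mathbf{1}[k\in X]$ destroys the $t$-wise independence of the indicators, so you cannot then invoke the Indyk-style high-moment bound, which requires the summands to be $t$-wise independent. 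In addition, the unconditional marginal is a \emph{fixed} number $p$ (in the paper, $p_{l^*}=\lambda^{l^*}(1-\lambda)$), not the random ratio $|X|/|D|$.

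The way the paper avoids this (and the way your argument can be repaired) is to never condition: concentrate the numerator $F_i=|\{k\in X\colon C_k=i\}|$ around its \emph{unconditional} expectation $E[F_i]=p\cdot|\{k\colon C_k=i\}|$, separately concentrate the denominator $|X|$ around $E[|X|]=p\cdot L_0$, observe $E[F_i]/E[|X|]=f^{-1}(i)$, and then combine the two events. Both $F_i$ and $|X|$ are genuine sums of $t$-wise independent indicators, so the moment inequality applies to each. Equivalently, one can bound the single mean-zero sum $F_i - f^{-1}(i)\,|X| = \sum_k \mathbf{1}[k\in X]\bigl(\mathbf{1}[C_k=i]-f^{-1}(i)\bigr)$ by $\tfrac{\epsilon}{2}E[|X|]$, together with a lower bound on $|X|$, to control the ratio. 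Either route gives the claim; the literal invocation in your draft does not.
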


\begin{proof}
See Appendix \ref{appendix:proofs}. \end{proof}

\paragraph*{Union and Difference}

Let $DS_{r,D_{i}}$ be the data structure obtained from data stream $D_{i}$ using the random bits $r$.
The union of streams $D_{1}$, $D_{2}$ is $DS_{r,D_{1}\cup D_{2}}=DS_{r,D_{1}}+DS_{r,D_{2}}$,
where the addition operator adds all $BS$s in all bins of all arrays.
Our sampling algorithm can extract a sample from a union of data streams. This
feature is useful when there are multiple entry points and each of them can update its own data structure locally and then a unified sample can be derived.

Sampling from the difference of streams $DS_{r,D_{1}-D_{2}}=DS_{r,D_{1}}-DS_{r,D_{2}}$ is similar. Note that
even if $D_{1}$ and $D_{2}$ are Strict Turnstile data streams, their difference might represent a Non-strict Turnstile stream. Hence our structures for the
Non-strict Turnstile model are useful for both input streams
in the Non-strict model and for sampling the difference.

\bibliographystyle{plain}
\bibliography{work}

\newpage

\appendix

\section{Approximation Using High Moments}

\label{appendix:chebyshev}

For a random variable $Z$ and an even number $l$,\[
\Pr[\left|Z-E[Z]\right|\geq t]=\Pr[(Z-E[Z])^{l}\geq t^{l}]\leq\frac{\Delta^{l}}{t^{l}}\]
where $\Delta^{l}=E[(Z-E[Z])^{l}]$ is the $l$'th central moment
of $Z$.
We use the estimation of \cite{indyk01} to $\Delta^{l}$, where $Z$ is a sum of $l$-wise independent indicator variables:
$\Delta^{l}\leq8(6l)^{(l+1)/2}{E[Z]^{(l+1)/2}}$.

This implies:\[
\Pr[\left|Z-E[Z]\right|\geq\alpha E[Z]]\leq\frac{48l}{\alpha}{\left(\frac{6l}{\alpha^{2}E[Z]}\right)}^{(l-1)/2}\label{eq:l_cheby}\]

\begin{lemma}\label{lemma:l_cheby} Let $Z$ be a sum of $l$-wise independent indicator
variables with $E[Z]=\frac{c}{\epsilon^{2}}\log\frac{1}{\delta}$
for some $0<\epsilon$, $\delta \in (0,1)$ and constant $c$. Let $l=\tilde{c} \log\frac{1}{\delta}$ for some constant $\tilde{c}$
be an even number.
Then for big enough constants $c, \tilde{c}$:
$\Pr[\left|Z-E[Z]\right|\geq\epsilon E[Z]]<\delta$.
 \end{lemma}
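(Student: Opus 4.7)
The plan is to apply the high-moment tail estimate displayed just above the lemma with $\alpha = \epsilon$ and substitute the hypothesized parameters directly. The crux is an immediate cancellation: the base of the exponential decay becomes
$$\frac{6l}{\alpha^2 E[Z]} = \frac{6\tilde c \log(1/\delta)}{\epsilon^2 \cdot (c/\epsilon^2) \log(1/\delta)} = \frac{6\tilde c}{c},$$
a quantity depending only on the two constants, independent of both $\epsilon$ and $\delta$. This cancellation is precisely the reason a $(1/\epsilon^2)\log(1/\delta)$-sized estimator works: the $1/\epsilon^2$ factor in $E[Z]$ is calibrated to kill the $\alpha^2 = \epsilon^2$ in the denominator, leaving a pure geometric decay in $l$.

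Given the cancellation, I would then choose $c$ large relative to $\tilde c$, for concreteness $c \geq 24\tilde c$, so that $6\tilde c / c \leq 1/4$. Plugging this back in, the tail bound becomes
$$\Pr[|Z - E[Z]| \geq \epsilon E[Z]] \leq \frac{48\tilde c \log(1/\delta)}{\epsilon}\cdot (1/4)^{(l-1)/2} \leq \frac{96\tilde c \log(1/\delta)}{\epsilon}\cdot 2^{-l}.$$
With $l = \tilde c \log_2(1/\delta)$, the exponential factor is exactly $\delta^{\tilde c}$, so the whole expression collapses to $(96 \tilde c \log(1/\delta)/\epsilon)\cdot \delta^{\tilde c}$.

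The remaining step is to fix $\tilde c$ large enough that $\delta^{\tilde c - 1} \log(1/\delta)/\epsilon \leq 1/(96 \tilde c)$, which yields the advertised $\delta$ bound. In the paper's regime $1/\epsilon$ is polynomially bounded in $1/\delta$, so any $\tilde c$ strictly larger than the polynomial exponent plus one suffices. The only obstacle worth flagging is the ordering of the two constants: one must fix $c$ first, purely to drive the base $6\tilde c/c$ below a constant $\rho < 1$ and thereby remove $\epsilon$ from the exponent, and only afterwards fix $\tilde c$ to absorb the $1/\epsilon$ prefactor via the $\delta^{\tilde c}$ decay. Reversing the order creates a circular dependence between $c$ and $\tilde c$, whereas doing it in this order reduces the argument to a clean one-line substitution into the pre-existing high-moment bound.
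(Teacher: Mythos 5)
Your argument is correct and mirrors the paper's own proof: both substitute $\alpha=\epsilon$, $E[Z]=\frac{c}{\epsilon^2}\log\frac{1}{\delta}$, and $l=\tilde c\log\frac{1}{\delta}$ into the displayed high-moment tail bound, exploit the $\epsilon^2$ cancellation that leaves the base $6\tilde c/c$, and then rely on the standing assumption $\epsilon > \poly(\delta)$ to absorb the $\frac{1}{\epsilon}\log\frac{1}{\delta}$ prefactor via the geometric decay in $l$. The paper is terser about the constant-picking (``for big enough $c,\tilde c$ \ldots for $\epsilon > \poly(\delta)$''); the only small imprecision on your end is that ``fix $c$ first'' should really be ``fix the ratio $c/\tilde c$ first,'' since a concrete value $c = 24\tilde c$ cannot be pinned down until $\tilde c$ is chosen, whereas imposing the constraint $c \ge 24\tilde c$ before choosing $\tilde c$ avoids the circularity you correctly flag.
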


\begin{proof} \begin{align*}
\Pr[\left|Z-E[Z]\right|\geq\epsilon E[Z]] & 
\leq\frac{48 \tilde{c} \log\frac{1}{\delta}}{\epsilon}
{\left(\frac{6 \tilde{c}\log\frac{1}{\delta}}{\epsilon^{2}\frac{c}{\epsilon^{2}}\log\frac{1}{\delta}}\right)}
^{(\tilde{c}\log\frac{1}{\delta}-1)/2}\\
 & =\frac{48\tilde{c}\log\frac{1}{\delta}}{\epsilon}{\left(\frac{6\tilde{c}}{c}\right)}^{(\tilde{c}\log\frac{1}{\delta}-1)/2}<\delta\end{align*}
For $\epsilon > \poly{(\delta)}$.
\end{proof}

Note that for a constant $\alpha$, in order to prove $\Pr[\left|Z-E[Z]\right|\geq\alpha E[Z]]<\delta$, $E[Z] = \Omega{(\log{\frac{1}{\delta}})}$ is sufficient.

\section{Proofs from Section \ref{sec:work}} \label{appendix:proofs}

\subsection{Proof of Lemma \ref{lemma:mapping}}

\begin{proof} Let $X_{l}$ be a random variable that indicates the
number of elements $k$ in the stream for which $(h_{l}(k)=0\wedge
h_{l}(k)\neq0)$. I.e. $X_{l}$ is the number of elements in level
$l$. $\mathcal{H}$ is a $t$-wise independent
hash family for $t = \Theta(\log{\frac{1}{\delta}})$.
Therefore for each $l\in L$ and $k$ in the stream, $\Pr_{h\in \mathcal{H}}[h_{l}(k)=0\wedge h_{l+1}(k)\neq0]=\lambda^{l}-\lambda^{l+1}=\lambda^{l}(1-\lambda)$.
We denote $p_{l}=\lambda^{l}(1-\lambda)$.

$L_{0}\leq\tilde{L}_{0}\leq\alpha L_{0}$ implies $\frac{1}{\alpha}\tilde{L}_{0}\leq L_{0}\leq\tilde{L}_{0}$
and we obtain:
$
{\frac{1}{\alpha}\tilde{L}_{0}p_{l^{*}+1}}<2K\leq{\frac{1}{\alpha}\tilde{L}_{0}p_{l^{*}}}\leq{L_{0}p_{l^{*}}}\leq{\tilde{L}_{0}p_{l^{*}}}\leq\frac{2\alpha K}{\lambda}
$.
The expected number of elements in level $l$ is $E[X_{l}]=L_{0}p_{l}$. Hence for level $l^{*}$, $
2K\leq E[X_{l^{*}}]\leq\frac{2\alpha K}{\lambda}$.

We write $E[X_{l^{*}}]=\beta K$ for some $2\leq\beta\leq\frac{2\alpha}{\lambda}$.
From Lemma \ref{lemma:l_cheby} we get: $
\Pr[\left|X_{l^{*}}-E[X_{l^{*}}]\right|\geq K]=\Pr[\left|X_{l^{*}}-E[X_{l^{*}}]\right|\geq\frac{1}{\beta}E[X_{l^{*}}]]<\delta $.
We can use the lemma since $X_{l^{*}}$ is a sum of $t$-wise independent variables, $t = \Theta(\log\frac{1}{\delta})$, and $K=\Omega(\log\frac{1}{\delta})$.
\end{proof}

\subsection{Proof of Lemma \ref{Collisions in arrays}}

\begin{proof} 
First we bound the probability that a specific element $k$ collides with another element in a specific array $a$.
Let $\mathcal{\mathcal{C}}_{kj}^{a}$ be the event
that elements $k$ and $j$ collide in array $a$.
Since pairwise independent hash functions are used to map the elements to bins, $\forall k,j,a\quad\Pr[\mathcal{\mathcal{C}}_{kj}^{a}]=\frac{1}{s}=\frac{1}{2\tilde{K}}$.
Let $\mathcal{C}_{k}^{a}$ be the event that $k$ collides with any
element in array $a$.
$\Pr\left[\mathcal{C}_{k}^{a}\right]=\Pr[\exists{j}\quad\mathcal{C}_{kj}^{a}]\leq\bigcup_{j\neq k}\ \Pr[\mathcal{\mathcal{C}}_{kj}^{a}]<\tilde{K} \cdot\Pr[\mathcal{\mathcal{C}}_{kj}^{a}]=\frac{1}{2}$.

Now we prove that with probability $1-\delta/{\tilde{K}}$ there is an array in which no element collides with $k$.
Let $\left| \mathcal{C}_{k}\right|$ be the number of arrays in which $k$ collides with another element.
We want to show $\left| {\mathcal{C}}_{k} \right| < \tau$ with high probability.
Let $X$ be the set of elements in $\FRS$. We know $\left| X \right| \leq \tilde{K}$. The hash functions of the different arrays are independent.
Therefore 
$\Pr[\left|\mathcal{\mathcal{C}}_{k}\right|=\tau]=\Pr[\forall a\in[\tau], \; \mathcal{C}_{k}^{a}]
\le\Pr[\forall a\in[\tau], \; \mathcal{C}_{k}^{a} \mid \left|X\right|=\tilde{K}]
=\prod_{a\in[\tau]}{\Pr[\mathcal{C}_{k}^{a} \mid \left|X\right|=\tilde{K}]}<\left(\frac{1}{2}\right)^{\tau}=\frac{\delta}{\tilde{K}}$.

We conclude with: $\Pr[\exists k, \; \mbox{$k$ collides in all arrays}]=\Pr[\exists k, \; \left|\mathcal{\mathcal{C}}_{k}\right|=\tau]$
$ \leq \bigcup_{k}\Pr[\left|\mathcal{\mathcal{C}}_{k}\right|=\tau]<\delta$.
\end{proof}

\subsection{Proof of Lemma \ref{lemma:alg3works}}

\begin{proof} If the set of unidentified elements is not a fail
set, then one of them is isolated in a bin.
Let $x$ be the element and $b$ be the bin in which $x$ is the single element.
If $x$ was isolated prior to the first scan over all bins, bin $b$ would have been identified in the first scan.
If $x$ became isolated during the recovery process, then all other elements in $b$ were identified beforehand.
When the last of them was identified and removed from $b$, $x$ became isolated, and $b$ was inserted to the queue. 
Each bin inserted to the queue is removed from the queue before the process ends, and hence when $b$ was removed, $x$ was identified.
Identifying $x$ results in a smaller set of unidentified elements.
The identification process can proceed until all elements are identified or none
of them are isolated, i.e. they all belong to a fail set. \end{proof}

\subsection{Proof of Lemma \ref{lemma:alg3time}}

\begin{proof}
The number of operations in the initial scan is linear in the number of bins in $\AFRS$, which is $O(K)$. The number of bins
inserted to $Q$ in the process is $O(K)$, because a bin is inserted only when an element is extracted and added to the sample.
Thus, apart from the operation of identifying the other bin an element is hashed to, all other operations take a total of $O(K)$ time.

Assume the phase of finding the other bin an extracted element $k$ is hashed to, is implemented in the algorithm by evaluating the hash function on $k$. 
This evaluation occurs $O(K)$ times. The hash functions are $t$-wise independent, where $t=\Theta(\log\frac{K}{\delta})$.
Thus, in this implementation the recovery time for all $O(K)$ elements is $O(K ({\log\log{\frac{K}{\delta}}})^2)$.

The recovery time can be reduced to $O(K)$ by using an additional counter $W$ in each $BS$.
Let $h_1$, $h_2$ be the two hash functions 
that map the elements to the bins in the two arrays.
When inserting an element $(x_i,c_i)$ to $BS$ in array $a \in {\{1,2}\}$, the following update is performed:
 $W \leftarrow W + c_i h_{3-a}(x_i)$.
The space and update time required by the algorithm remain of the same order, 
since the update takes an additional $O(1)$ time and the space of $W$ is $O(\log{(mr)})$ bits.

If $BS$ in array $a$ contains a single element $(k,C_k)$, then $W = C_k h_{3-a}(k)$. 
Thus, if $(k,C_k)$ is extracted from $BS$ in array $a$ we obtain its location $h_{3-a}(k)$ in the other array $3-a$ without evaluating a hash function. 
The recovery time is $O(K)$ for all $O(K)$ elements.
\end{proof}

\subsection{Proof of Lemma \ref{theorem:inverse}}

\begin{proof}

Our estimator is $
f^{-1}(i) \approx \frac{\left|\{k \in S \colon C_k = i \} \right|}{\left| S \right|}
$.
We need to prove that it is an $\epsilon$-approximation to $f^{-1}(i) = \frac{\left|\{k \colon C_k = i \} \right|}{\left|\{k \colon C_k \neq 0 \} \right|}$, the fraction of distinct values with total count equals $i$, with probability $1-\delta$.

We prove that it is an $\epsilon$-approximation when all elements are recovered from the recovery structure, i.e. when $\FRS$ is used. Later we relax the assumption that all elements are recovered, and thus prove that we get an $\epsilon$-approximation also when using $\AFRS$. Thus, an $\epsilon$-approximation is provided with $1-\delta$ success probability when using both our sampling algorithms.

The elements recovered are from a specific level $l^{*}$ that depends on $L_{0}$. 
For value $k$, $C_{k}\neq0$, we define a random variable
$Y_{k}$. $Y_{k}=1$ if $(k,C_{k})$ is mapped to level $l^{*}$.
$\Pr[Y_{k}=1]=\lambda^{l^{*}}(1-\lambda)$, and we denote $p_{l^{*}}=\lambda^{l^{*}}(1-\lambda)$.

Let $Y=\sum_{k}{Y_{k}}$ be the number of elements in level $l^{*}$.
For now assume that all elements in the level are recovered, i.e.
$Y=\left|S\right|$. Later we relax this assumption. $E[Y]$ is the expected sample
size obtained from level $l^{*}$. Thus $E[\left|S\right|]=E[Y]=L_{0}p_{l^{*}}$.
We choose $l^{*}$ as a level with $\Theta(K)$ elements, i.e. $E[Y] =\Theta(K) = \Omega(\frac{1}{\epsilon^2}\log\frac{1}{\delta})$. 

${\{Y_{k}\}}$ are $t$-wise independent, where $t = \Theta(\log\frac{1}{\delta})$. $E[Y] = \Omega(\frac{1}{\epsilon^2}\log\frac{1}{\delta})$. From Lemma~\ref{lemma:l_cheby}
we obtain $\Pr[\left|Y-E[Y]\right|>\epsilon'E[Y]]<\delta'$, for $\epsilon'=\Theta(\epsilon)$ and $\delta' = \Theta(\delta)$
that will be determined later.
Thus, with probability $1-\delta'$, \[
(1-\epsilon')E[\left|S\right|]\leq\left|S\right|\leq(1+\epsilon')E[\left|S\right|] \]

Let $F_{i}=\sum_{k \colon C_{k}=i}{Y_{k}}$, be the number of elements
in $S$ with frequency $i$. $
E[F_{i}]=\left|\{k \colon C_{k}=i\}\right|\cdot p_{l}^{*}=f^{-1}(i)\cdot E[\left|S\right|] $.
We get $f^{-1}{(i)}=\frac{E[F_{i}]}{E[\left|S\right|]}$. If
all elements in the level are recovered, the estimator of $f^{-1}(i)$ can be written as:
$f^{-1}{(i)}\approx\frac{F_{i}}{\left|S\right|}$.

Hence we would like to prove: \begin{equation}
\Pr\left[\left|f^{-1}(i)-\frac{F_{i}}{\left|S\right|}\right|\geq\epsilon\right]\leq\delta\label{eq:estimator}\end{equation}

$\frac{E[F_{i}]}{\left|S\right|}$ is an $\epsilon'$-approximation to $\frac{E[F_{i}]}{E[\left|S\right|]}$, i.e. $\Pr \left[\left|f^{-1}(i) - \frac{E[F_{i}]}{\left|S\right|} \right|\geq\epsilon' \right]\leq\delta'$ since
\[
\frac{E[F_{i}]}{E[\left|S\right|]}-\epsilon'\leq\frac{E[F_{i}]}{(1+\epsilon')E[\left|S\right|]}\leq\frac{E[F_{i}]}{\left|S\right|}\leq\frac{E[F_{i}]}{(1-\epsilon')E[\left|S\right|]}\leq\frac{E[F_{i}]}{E[\left|S\right|]}+\epsilon'\]

with probability $1-\delta'$ when $f^{-1}(i)=\frac{E[F_{i}]}{E[\left|S\right|]}\leq1-\epsilon'$.
Then by using Lemma \ref{lemma:l_cheby} again we get (\ref{eq:estimator}),
for a constant $c$, $2\epsilon'$-additive error and $1-2\delta'$ success probability.

Now we relax the assumption that all elements in the level were recovered.
The maximal bias occurs if $\epsilon'\left|S\right|$ elements were not recovered,
and all unrecovered elements had frequency $i$. I.e. $\left|\{k\in S \colon C_{k}=i\}\right|=F_{i}\pm\epsilon'\left|S\right|$.
Thus, there is an additional additive error of $\epsilon'$. Setting
$\epsilon'=\epsilon/3$, $\delta' = \delta/3$ completes the proof.
\end{proof}

\end{document}